\newif\ifarxiv\arxivtrue
\newcommand{\comment}[1]{}
\newtheorem{thm}{Theorem}[section]
\newtheorem{lemma}[thm]{Lemma}
\theoremstyle{definition}
\newtheorem{example}[thm]{Example} % [section]
\theoremstyle{remark}
\newtheorem{remark}[thm]{Remark} % [section]
\newcommand\Perp{\protect\mathpalette{\protect\independenT}{\perp}}
\newcommand{\ind}[3][]{\left.#2 \Perp\!\!_{\smash{#1}}\; #3\inD}
\newcommand{\inD}[1][\relax]{\def\argone{#1}\def\temprelax{\relax}%
% the strut ensures a minimum size for the middle bar
  \ifx\argone\temprelax\right.\else\,\strut\middle|\,#1\right.{}\fi}
\def\independenT#1#2{\mathrel{\rlap{$#1#2$}\mkern2mu{#1#2}}}
\newcommand{\intDeltap}{\overset{\circ}{\Delta}_P}
\newcommand{\Rb}{\mathbb{R}}
\newcommand{\Tcal}{\mathcal{T}}
\newcommand{\Xcal}{\mathcal{X}}
\newcommand{\Ycal}{\mathcal{Y}}
\newcommand{\trans}{\top}
\DeclareMathOperator*{\bigtimes}{\textnormal{\Large $\times$}} % Cartesian Product
\DeclareMathOperator{\supp}{supp}
\DeclareMathOperator{\diag}{diag}
\title{Properties of Unique Information}
\author[Johannes Rauh]{Johannes Rauh$^{\dag}$}
\thanks{$^{\dag}$These authors contributed equally.}
\address{MPI für Mathematik in den Naturwissenschaften, Leipzig}
\email{jrauh@mis.mpg.de}
\author[Maik Schünemann]{Maik Schünemann$^{\dag}$}
\address{Institute for Theoretical Physics, University of Bremen}
\email{maik@neuro.uni-bremen.de}
\author{Jürgen Jost}
\address{MPI für Mathematik in den Naturwissenschaften, Leipzig}
\email{jjost@mis.mpg.de}
\author{Johannes Rauh\thanks{These authors contributed equally.}, Maik Sch\"unemann\footnotemark[1] and J\"urgen Jost}
\date{\today}
\begin{document}

\ifarxiv
\keywords{Information decomposition, unique information}
\subjclass[2010]{%
  94A15; % 94A15 - Information theory, general [See also 62B10, 81P45]
  94A17} % 94A17 - Measures of information, entropy
\else
\maketitle % for AMSART, maketitle must be after abstract
\fi

\begin{abstract}
  We study the unique information function $UI(T:X\setminus Y)$
  defined by \citet{bertschinger2014quantifying} within the framework
  of information decompositions. In particular, we study uniqueness
  and support of the solutions to the convex optimization problem
  underlying the definition of~$UI$. We identify sufficient conditions
  for non-uniqueness of solutions with full support in terms of
  conditional independence constraints and in terms of the cardinalities of
  $T$, $X$ and~$Y$. Our results are based on a reformulation of the
  first order conditions on the objective function as rank constraints
  on a matrix of conditional probabilities. These results help to
  speed up the computation of~$UI(T:X\setminus Y)$, most notably when
  $T$ is binary. Optima in the relative interior of the
  optimization domain are solutions of linear equations if $T$ is
  binary. In the all binary case, we obtain a complete picture of where
  the optimizing probability distributions lie.
\end{abstract}

\ifarxiv
\maketitle
\tableofcontents
\else
\keywords{Information decomposition, unique information}
\classification{94A15,94A17}
\fi

\section{Introduction}
\label{sec:intro}

\citet{bertschinger2014quantifying} introduced an information measure $UI(T:X\setminus Y)$ which they
called \emph{unique information}.  The function $UI$ is proposed within the framework of information
decompositions \citep{WilliamsBeer:Nonneg_Decomposition_of_Multiinformation} to quantify the amount
of information about $T$ that is contained in $X$ but not in~$Y$.  Similar quantities
within this framework have been proposed by
\citet{HarderSalgePolani2013:Bivariate_redundancy}, % $I_{\text{red}}$
\citet{Ince17:Iccs}, % $I_{\text{ccs}}$
\citet{JamesEmenheiserCrutchfield18:Idep} %$I_{\text{dep}}$ by
and \citet{NiuQuinn:IG_decomposition}.  Among them, the quantity $UI$ probably has the clearest
axiomatic characterization.  Although it has received a lot of attention by
theorists \citep[see e.g.][]{RBOJ14:Reconsidering_unique_information,BOJR18:UI_and_deficiencies,RBOJ19:UI_and_Secret_Key_Decompositions}, so far, applications have
focused on other measures, because $UI$ is difficult to compute, although there has been recent
progress \citep{MakkehTheisVicente17:PID_optimization_perspective,BanerjeeRauhMontufar18:Computing_the_UI}.

The function $UI$ is defined by means of an optimization problem.  Let $T$, $X$, $Y$ be random
variables with finite state spaces $\Tcal,\Xcal,\Ycal$ and with a joint distribution~$P$.  Let
$\Delta_{\Tcal,\Xcal,\Ycal}$ be the set of all joint distributions of such random variables, and let
\begin{multline*}
  \Delta_{P} = \Big\{
  Q\in\Delta_{\Tcal,\Xcal,\Ycal} : Q(X=x,T=t) = P(X=x,T=t), \\ Q(Y=y,T=t) = P(Y=y,T=t)
  \text{ for all }x\in\Xcal,y\in\Ycal,t\in\Tcal
  \Big\}
\end{multline*}
be the set of all joint distributions that have the same pair marginals as~$P$ for the pairs $(X,T)$
and~$(Y,T)$.  Then
\begin{equation}
  \label{eq:def_UI}
  UI(T:X\setminus Y) = \min_{Q\in\Delta_{P}}I_{Q}(T:X|Y),
\end{equation}
where $I_{Q}(T:X|Y)$ denotes the conditional mutual information of $T$ and $X$ given~$Y$, computed with
respect to~$Q$.
Due to the definition of~$\Delta_{P}$, the optimization problem in~\eqref{eq:def_UI} can be reformulated as follows:
\begin{equation}
  \label{eq:optimization}
  \min_{Q\in\Delta_{P}}I_{Q}(T:X|Y)
  = H(T|Y) - \max_{Q\in\Delta_{P}}H(T|X,Y).
\end{equation}

This paper studies $UI$, focusing on the following two questions:
\begin{enumerate}
\item When is there a unique solution to the optimization problems in~\eqref{eq:optimization}?
\item When is there a solution in the relative interior of~$\Delta_{P}$?
\end{enumerate}

In the framework of information decomposition, the solutions to the optimization
problems~\eqref{eq:optimization} are distributions with ``zero synergy about~$T$.''  Thus,
understanding these solutions sheds light on the concept of synergy.  If the solution is unique,
there is a unique way to combine the random variables $X$ and~$Y$ without synergy about~$T$ that
preserves the $(X,T)$- and $(Y,T)$-marginals.

Moreover, a unique solution $Q^{*}$ might be used to ``localize'' the
information decomposition, in the sense of~\citet{FinnLizier18:Pointwise_PI_using_SPAM}; although one
should keep in mind that the support of $Q^{*}$ may satisfy $\supp(Q^{*})\not\supseteq\supp(P)$.
A better understanding of the optimization problems also helps in the
computation of~$UI$. In the case where \(\Tcal\) is binary, an optimum
in the interior of \(\Delta_P\) can be found as solutions of linear
equations. Solving an optimization problem can be avoided in the all
binary case in which we derive a closed form solution of the
optimization problem.

\subsection*{Summary of results and outline}

Section~\ref{sec:structure-Delta_P} describes how the optimization domain~$\Delta_{P}$ and its support depend on \(P\).

Section~\ref{sec:support-uniqueness-optimum} summarizes general facts about the optimization problem.
The relationship between uniqueness of the optimizer and the supports of the optimizers is discussed, and sufficient conditions for non-uniqueness are identified.
% In particular, it is shown that under some conditions on the support of \(\Delta_P \), if both conditional independence statements \(\ind[P]{T}{X}\) and \(\ind[P]{T}{Y}\) hold, it follows that the optimum is not unique.

Section~\ref{sec:binary-T} specializes to the case where~$T$ is binary.  In this case, if there is an optimizer in the interior, then this optimizer satisfies a conditional independence constraint. In general, the optimizer is not unique.
We analyze how often the optimum lies in the interior or at the boundary of \(\Delta_P\) and how often an optimum in the interior is unique as a function of the cardinalities of \(\Xcal,\Ycal\) when sampling \(P\) uniformly from \(\Delta_{\Tcal,\Xcal,\Ycal} \).

Section~\ref{sec:binary-case} gives a complete picture for the case where all variables are binary.  In this case, $\Delta_{P}$ is a rectangle, a line segment or a single point.  A closed form expression is given for optimizers that lie in the interior of \(\Delta_P\).  If the optimizer does not lie in the interior, the optimum is attained at a vertex of~\(\Delta_P\).

Section~\ref{sec:examples} collects examples that demonstrate that the
conditions of some of our results are indeed necessary. The final
Section~\ref{sec:conclusions} presents our conclusions.

\section{The optimization domain \texorpdfstring{$\Delta_{P}$}{Delta-P}}
\label{sec:structure-Delta_P}

Fix a joint distribution~$P\in\Delta_{\Tcal,\Xcal,\Ycal}$.  Since the marginal of~$T$ is constant
on~$\Delta_{P}$, the support of~$T$, which we denote by $\Tcal' := \{t\in\Tcal : P(T=t)>0\}$, is
also constant on~$\Delta_{P}$.

% \subsubsection*{The optimization domain $\Delta_{P}$.}

Any distribution $Q\in\Delta_{P}$ is characterized uniquely by the conditional probabilities
$Q(X,Y|T=t)$ for~$t\in\Tcal'$.  The map
\begin{equation*}
  P\in\Delta_{\Tcal,\Xcal,\Ycal}\mapsto \big(P(X,Y|T=t)\big)_{t\in\Tcal'}
\end{equation*}
(where $\Tcal'$ depends on~$P$) induces a linear bijection
\begin{equation*}
  \Delta_{P} = % \bigcup_{\Tcal'\subseteq\Tcal,\Tcal'\neq\emptyset}
  \bigtimes_{t\in\Tcal'}\Delta_{P,t},
\end{equation*}
where
\begin{align*}
  \Delta_{P,t} = \Big\{ Q\in\Delta_{\Xcal,\Ycal} : Q(X=x) = P(X=x&|T=t),\\ Q(Y=y) = P(Y=y&|T=t) \Big\},
\end{align*}
and $\Delta_{\Xcal,\Ycal}$ is the set of all probability distributions of random variables \(X,Y \) with finite state spaces \(\Xcal,\Ycal\).
For example, when $X$ and $Y$ are binary, $\Delta_{P,t}$ is a line segment (which may degenerate to a point) for all~$t\in\Tcal'$.
Thus, $\Delta_{P}$ is a product of line segments; that is, a hypercube (up to a scaling).  If $T$ is also binary, then $\Delta_{P}$ is a
rectangle (a product of two line segments), which may degenerate to a
line segment or even a point depending on the support of $P$.
A figure of $\Delta_{P}$ in the case that all variables are binary (when $\Delta_{P}$ is a rectangle) can be found in\citep{bertschinger2014quantifying}.
Figure~\ref{fig:vis-223} makes use of the product structure to visualize $\Delta_{P}$ in the case $|\Tcal|=2=|\Xcal|$, $|\Ycal=3|$, where $\dim(\Delta_{P}) = 4$.

In the following, for $Q\in\Delta_{P}$ and $t\in\Tcal'$, we write $Q_{t}:=Q(X,Y|T=t)$ for the conditional distribution
of $X,Y$ given that~$T=t$.  The product structure of $\Delta_{P}$ implies: if $Q\in\Delta_{P}$ lies on the boundary of~$\Delta_{P}$,
then at least one of the $Q_{t}$ lies on the boundary of~$\Delta_{P,t}$.
Moreover, $Q$ lies on the boundary of~$\Delta_{\Tcal,\Xcal,\Ycal}$.  Hence,
the boundaries of the polytopes
$\Delta_{P}$ or $\Delta_{P,t}$ are characterized by the vanishing of probabilities.

\begin{remark}
  In the following, the expression \emph{boundary of $\Delta_{P}$} refers to the relative boundary.
  If $P$ lies on the boundary of~$\Delta_{\Tcal,\Xcal,\Ycal}$, then $\Delta_{P}$ may be a subset of
  the boundary of~$\Delta_{\Tcal,\Xcal,\Ycal}$ (cf.\ Lemma~\ref{lem:DeltaP_on_boundary}).  This happens if and
  only if one probability vanishes throughout $\Delta_{P,t}$ (and thus one probability vanishes
  throughout~$\Delta_{P}$).  In this case, $\Delta_{P}$ is part of the boundary
  of~$\Delta_{\Tcal,\Xcal,\Ycal}$.  However, the (relative) boundary of $\Delta_{P}$ is a strict
  subset of~$\Delta_{P}$, and the same holds for~$\Delta_{P,t}$.
\end{remark}

% \subsection{Coordinates in \texorpdfstring{$\Delta_{P}$}{DeltaP}.}

Let $A$ be the linear map that maps a joint distribution~$P\in\Delta_{\Tcal,\Xcal,\Ycal}$ to the pair $(P(X,T),P(Y,T))$ of marginal distributions.  Then
\begin{equation*}
  \Delta_{P} = (P + \ker(A)) \cap \Delta_{\Tcal,\Xcal,\Ycal}.
\end{equation*}

The difference of any two elements of $\Delta_{P}$ belongs to $\ker(A)$.  Conversely, the elements of $\ker(A)$  can be used to move within each~$\Delta_{P}$.
A generating set of $\ker(A)$ is given by the vectors
\begin{equation}\label{eq:def-gamma}
  \gamma_{t;x,x';y,y'} = \delta_{t,x,y} + \delta_{t,x',y'} - \delta_{t,x,y'} - \delta_{t,x',y},
  \qquad x,x'\in\Xcal, y,y'\in\Ycal,t\in \Tcal
\end{equation}
where $\delta_{t,x,y}$ denotes the Dirac measure supported at $T=t,X=x,Y=y$.
These vectors are not linearly independent.  One way to choose a linearly independent subset is to fix $x_{0}\in\Xcal$, $y_{0}\in\Ycal$.  Then the set
\begin{equation*}
  \Gamma := \big\{\gamma_{t;x,x_{0};y,y_{0}}:
  x\in\Xcal\setminus\{x_{0}\},
  y\in\Ycal\setminus\{y_{0}\},
  t\in \Tcal
  \big\}
\end{equation*}
is a basis of~$\ker(A)$.

\begin{remark}
  \label{rem:Markovbasis}
Apart from being symmetric, the larger dependent set has the following advantage, which is reminiscent of the Markov basis property \citep{DiaconisSturmfels98:Algebraic_Sampling}: Any
two points $Q,Q'\in\Delta_{P}$ can be connected by a path in $\Delta_{P}$ by applying a sequence of multiples of the
elements~$\gamma_{t;x,x';y,y'}$.  The same is not true if we restrict $x',y'$ to $x_{0},y_{0}$: if $Q(X=x_{0}) = 0$, then
adding a multiple of $\gamma_{t;x,x_{0};y,y_{0}}$ for any $x\in\Xcal$, $y\in\Ycal$ leads to a negative entry.
\end{remark}

Let $V$ be the set of distributions $Q_{0}\in\Delta_{\Tcal,\Xcal,\Ycal}$ that have a factorization of
the form
\begin{equation*}
  Q_{0}(t,x,y) = Q_{0}(t) Q_{0}(x|t) Q_{0}(y|t).
\end{equation*}
Thus, $V$ consists of all joint distributions that satisfy the Markov chain $X\;\text{--}\;T\;\text{--}\;Y$.
For each $P\in\Delta_{\Tcal,\Xcal,\Ycal}$, the intersection $\Delta_{P}\cap V$ contains precisely one element $Q_{0}=Q_{0}(P)$; namely
\begin{equation}
  \label{eq:Q0P}
  Q_{0}(t,x,y) = P(t) P(x|t) P(y|t).
\end{equation}
In the language of information geometry, $\Delta_{P}$ is a linear family that is dual to the exponential family~$V$~\citep{AmariNagaoka00:Methods_of_Information_Geometry}.
A general distribution \(Q \in \Delta_{\Tcal,\Xcal,\Ycal}\) can thus be expressed uniquely in the form
\begin{align}
\label{eq:decomposition}
  Q = Q_0 + \sum_{t,x',y'}P(t)g_{t,x',y'}\gamma_{t;x,x^0;y,y^0}
\end{align}
with \(Q_0=Q_{0}(Q) \in V\) and \(g =
(g_{t,x',y'})_{t,x'\neq x^0,y' \neq y^0}\)
denoting the coefficients with respect to \(\Gamma\). 

Let $\supp(\Delta_{P}) := \bigcup_{Q\in\Delta_{P}}\supp(Q)$ be the largest support of
an element of $\Delta_{P}$.  Generic elements of $\Delta_{P}$ have support $\supp(\Delta_{P})$.
We also let
\begin{multline*}
  \supp(\Delta_{P,t}) := \bigcup_{Q\in\Delta_{P}}\supp(Q_{t}) \\
  = \big\{ (x,y)\in\Xcal\times\Ycal : (t,x,y)\in\supp(\Delta_{P})\big\}\text{ for~$t\in\Tcal'$.}
\end{multline*}
If $\Delta_{P}$ is a singleton,
then $P=Q_{0}$.  In this case, $\supp(\Delta_{P}) = \supp(P)$, and $\supp(\Delta_{P,t}) = \supp(P_{t})$.

For $t\in\Tcal'$ let $\Xcal_{t}=\big\{x\in\Xcal: P(X=x|T=t)>0\big\}$ and
$\Ycal_{t}= \big\{y\in\Ycal: P(Y=y|T=t)>0\}$.
It follows from the definitions:
\begin{lemma}
  \label{lem:suppDeltaPt}
  Let $t\in\Tcal'$.  Then
  $\supp(\Delta_{P,t}) = \supp(Q_{0,t}) = \Xcal_{t}\times\Ycal_{t}$.
  Moreover, $\supp(\Delta_{P}) = \supp(Q_{0})$.  Thus, $Q_{0}$ has maximal support in~$\Delta_{P}$.
\end{lemma}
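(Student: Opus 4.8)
The plan is to reduce everything to the single observation that the independent coupling $Q_{0,t}(x,y) = P(x|t)P(y|t)$ already exhibits the maximal support. Since $Q_{0,t}$ factorizes, its support is exactly $\{(x,y) : P(x|t) > 0 \text{ and } P(y|t) > 0\} = \Xcal_{t}\times\Ycal_{t}$, which gives the second equality $\supp(Q_{0,t}) = \Xcal_{t}\times\Ycal_{t}$ immediately. It then remains to establish $\supp(\Delta_{P,t}) = \Xcal_{t}\times\Ycal_{t}$, which I would do by proving two inclusions.

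For the inclusion $\supp(\Delta_{P,t}) \subseteq \Xcal_{t}\times\Ycal_{t}$, I would take any $Q\in\Delta_{P,t}$ and any $(x,y)\in\supp(Q)$. Marginalizing, $P(x|t) = Q(X=x) \geq Q(x,y) > 0$ and likewise $P(y|t) = Q(Y=y) > 0$, so $(x,y)\in\Xcal_{t}\times\Ycal_{t}$; taking the union over $Q\in\Delta_{P}$ yields the inclusion. For the reverse inclusion, the key point is that $Q_{0,t}$ itself lies in $\Delta_{P,t}$: summing the product $P(x|t)P(y|t)$ over $y$ (respectively over $x$) recovers the prescribed marginal $P(x|t)$ (respectively $P(y|t)$), using that the conditional marginals sum to one. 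Alternatively, this membership is immediate from $Q_{0}\in\Delta_{P}$ and the product bijection $\Delta_{P} = \bigtimes_{t\in\Tcal'}\Delta_{P,t}$. Hence $\Xcal_{t}\times\Ycal_{t} = \supp(Q_{0,t}) \subseteq \supp(\Delta_{P,t})$, and combining the two inclusions gives the claimed equality.

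For the ``moreover'' statement, I would invoke the product decomposition $\Delta_{P} = \bigtimes_{t\in\Tcal'}\Delta_{P,t}$ together with the first part. Since $P(t) > 0$ exactly for $t\in\Tcal'$, the support of $Q_{0}$ is $\bigcup_{t\in\Tcal'}\{t\}\times(\Xcal_{t}\times\Ycal_{t})$; and the characterization of $\supp(\Delta_{P,t})$ as the $t$-slice of $\supp(\Delta_{P})$ shows that $\supp(\Delta_{P})$ equals the same union. Therefore $\supp(\Delta_{P}) = \supp(Q_{0})$. Finally, since $\supp(\Delta_{P}) = \bigcup_{Q\in\Delta_{P}}\supp(Q)$ by definition, the equality $\supp(Q_{0}) = \supp(\Delta_{P})$ forces $\supp(Q) \subseteq \supp(Q_{0})$ for every $Q\in\Delta_{P}$, i.e.\ $Q_{0}$ has maximal support.

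As for obstacles, there is no deep step here: the whole statement follows from the definitions, as the ``It follows from the definitions'' preamble signals. The only places demanding mild care are checking that the reverse inclusion uses a genuine element of $\Delta_{P,t}$ (the product coupling) rather than an abstract existence argument, and noting that marginalizing the product over the full alphabet $\Xcal$ (respectively $\Ycal$) still returns the prescribed marginals because the complementary terms vanish.
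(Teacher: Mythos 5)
Your proof is correct, and it is exactly the routine unpacking of definitions that the paper has in mind: the paper gives no explicit argument beyond the preamble ``It follows from the definitions,'' and your two-inclusion verification (using that $Q_{0,t}$ factorizes and lies in $\Delta_{P,t}$, plus the product bijection $\Delta_{P}=\bigtimes_{t\in\Tcal'}\Delta_{P,t}$) fills that in faithfully. The only cosmetic slip is writing ``taking the union over $Q\in\Delta_{P}$'' when your argument quantifies over $Q\in\Delta_{P,t}$, but the two are interchangeable via the bijection you cite, so nothing is lost.
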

% \begin{proof}
%   The second equality follows from~\eqref{eq:Q0P}.
%   The containment
%   \begin{equation*}
%     \supp(\Delta_{P,t})
%     \subseteq \big\{x\in\Xcal: P_{t}(X=x)>0\big\} \times \big\{y\in\Ycal: P_{t}(Y=y)>0\}
% %    = \big\{(x,y) : \text{ there exist }x'\in\Xcal,y'\in\Ycal\text{ with }(x,y'),(x',y)\in\supp(P_t) \big\}.
%   \end{equation*}
%   follows since any $Q\in\supp(\Delta_{P,t})$ has the same $X$-marginal and the same $Y$-marginal
%   as~$P_{t}$.  For the converse containment, let
%   \begin{equation*}
%     Q(x,y) = P_{t}(X=x)\cdot P_{t}(Y=y)\text{ for all }x\in\Xcal, Y\in\Ycal.
%   \end{equation*}
%   Then $Q\in\Delta_{P,t}$, and
%   \begin{equation*}
%     \supp(Q)
%     = \big\{x\in\Xcal: P(X=x)>0\big\} \times \big\{y\in\Ycal: P(Y=y)>0\}.\qedhere
%   \end{equation*}
% \end{proof}
The next lemma follows from Lemma~\ref{lem:suppDeltaPt} and the definitions:
\begin{lemma}\label{lem:DeltaP_on_boundary}
  Let $t\in\Tcal$, $x\in\Xcal$ and $y\in\Ycal$.
  The following statements are equivalent:
  \begin{enumerate}
  \item $\Delta_{P}$ lies in the face of $\Delta_{\Tcal,\Xcal,\Ycal}$ defined by $Q(t,x,y) = 0$.
  \item $(t,x,y)\notin \supp(\Delta_{P})$.
  \item Every $Q\in\Delta_{P}$ satisfies $Q(t,x,y)=0$.
  \item $Q_{0}:=Q_{0}(P)$ satisfies $Q_{0}(t,x,y)=0$.
  % \item One of the two sets $\{x'\in\Xcal : P(t,x',y)>0\}$ and $\{y'\in\Xcal :
  % P(t,x,y')>0\}$ is empty (or both).
  \item $P(T=t,Y=y)P(T=t,X=x)=0$.
  \end{enumerate}
\end{lemma}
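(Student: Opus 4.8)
The plan is to prove the cyclic equivalence by unwinding the definitions of \emph{face} and \emph{support} for the first three statements, then invoking Lemma~\ref{lem:suppDeltaPt} to bring in~$Q_{0}$, and finally using the explicit factorization~\eqref{eq:Q0P} to connect with the marginal condition~(5). Concretely, I would establish (1)$\Leftrightarrow$(2)$\Leftrightarrow$(3), then (3)$\Leftrightarrow$(4), and finally (4)$\Leftrightarrow$(5).

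First, the equivalence of (1), (2) and (3) is purely a matter of definitions. The face of the simplex $\Delta_{\Tcal,\Xcal,\Ycal}$ cut out by the single linear equation $Q(t,x,y)=0$ consists exactly of those distributions whose $(t,x,y)$-coordinate vanishes; hence $\Delta_{P}$ lies in this face precisely when every $Q\in\Delta_{P}$ has $Q(t,x,y)=0$, which is statement~(3), so (1)$\Leftrightarrow$(3). Similarly, from the definition $\supp(\Delta_{P})=\bigcup_{Q\in\Delta_{P}}\supp(Q)$, the index $(t,x,y)$ fails to lie in $\supp(\Delta_{P})$ if and only if no $Q\in\Delta_{P}$ assigns positive mass to it, i.e.\ $Q(t,x,y)=0$ for all such~$Q$; thus (2) and (3) coincide.

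Next, for (3)$\Leftrightarrow$(4) I would use that $Q_{0}=Q_{0}(P)$ is the unique element of $\Delta_{P}\cap V$ and in particular lies in~$\Delta_{P}$, so (3) immediately yields (4). For the converse, Lemma~\ref{lem:suppDeltaPt} gives $\supp(\Delta_{P})=\supp(Q_{0})$, whence $Q_{0}(t,x,y)=0$ is equivalent to $(t,x,y)\notin\supp(\Delta_{P})$, which is statement~(2), already identified with~(3).

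Finally, (4)$\Leftrightarrow$(5) follows from~\eqref{eq:Q0P}. Writing $P(T=t,X=x)=P(t)P(x|t)$ and $P(T=t,Y=y)=P(t)P(y|t)$ gives the identity $P(T=t,X=x)\,P(T=t,Y=y)=P(t)^{2}P(x|t)P(y|t)=P(t)\,Q_{0}(t,x,y)$. When $t\in\Tcal'$ the factor $P(t)$ is strictly positive, so the product in~(5) vanishes exactly when $Q_{0}(t,x,y)$ does. The single point requiring care—and the only mild obstacle in an otherwise definitional argument—is the degenerate case $t\notin\Tcal'$, where the conditionals $P(x|t),P(y|t)$ are undefined; there one argues directly that $P(t)=0$ forces $Q_{0}(t,x,y)=0$ as well as $P(T=t,X=x)=P(T=t,Y=y)=0$, so both (4) and (5) hold automatically. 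This closes the cycle and completes the proof.
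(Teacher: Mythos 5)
Your proof is correct and follows exactly the route the paper intends: the paper gives no explicit proof, stating only that the lemma ``follows from Lemma~\ref{lem:suppDeltaPt} and the definitions,'' and your argument is precisely a careful unpacking of that — definitions for (1)$\Leftrightarrow$(2)$\Leftrightarrow$(3), Lemma~\ref{lem:suppDeltaPt} together with $Q_{0}\in\Delta_{P}$ for (3)$\Leftrightarrow$(4), and the factorization~\eqref{eq:Q0P} for (4)$\Leftrightarrow$(5), including the degenerate case $t\notin\Tcal'$.
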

%%% JR: I commented out the proof.  Most of the proof is trivial.  After adding some statements, the numbering got mixed up, and I'm not motivated to maintain the proof now.
% \begin{proof}
%   The equivalence of (1) and (2) is by definition.
%   The equivalence of (1) and (3) follows from the discussion above the Lemma.
%   To see that (3) is equivalent, observe that the assumption implies that for any
%   $x'\in\Xcal\setminus\{x\}$, $y'\in\Ycal\setminus\{y\}$, either $Q^{*}(t,x,y')=0$ or $Q^{*}(t,x',y)=0$; for otherwise,
%   $Q_{\epsilon}:=Q^{*}+\epsilon\gamma_{t;x,x';y,y'}$ for small $\epsilon>0$ would be an element in~$\Delta_{P}$ with
%   $Q_{\epsilon}(t,x,y)=\epsilon>0$.
% \end{proof}

\begin{lemma}
  \label{lem:Delta_Ps_singleton}
  Let $t\in\Tcal'$.  The following are equivalent:
  \begin{enumerate}
  \item $\Delta_{P,t}$ is a singleton.
  \item At least one of $\Xcal_{t}$, $\Ycal_{t}$ is a singleton.
  % \item Either
  %   $\{ x\in\Xcal : xy\in\supp(P_{t}) \text{ for some }y\in\Ycal\}$
  %    or 
  %   $\{ y\in\Ycal : xy\in\supp(P_{t}) \text{ for some }x\in\Xcal\}$
  %   is a singleton.
  % \item Either
  %   $\{ x\in\Xcal : xy\in\supp(\Delta_{P,t}) \text{ for some }y\in\Ycal\}$
  %    or
  %   $\{ y\in\Ycal : xy\in\supp(\Delta_{P,t}) \text{ for some }x\in\Xcal\}$
  %   is a singleton.
  \end{enumerate}
\end{lemma}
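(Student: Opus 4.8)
The plan is to prove the two implications separately, leaning throughout on the support identity $\supp(\Delta_{P,t}) = \Xcal_{t}\times\Ycal_{t}$ and on the full-support element $Q_{0,t}$ supplied by Lemma~\ref{lem:suppDeltaPt}. For the easy direction (2) $\Rightarrow$ (1), I would argue that a degenerate marginal pins down the whole joint distribution. Assuming without loss of generality that $\Xcal_{t}=\{x_{0}\}$ is a singleton, we have $P(X=x_{0}|T=t)=1$, so every $Q\in\Delta_{P,t}$ satisfies $Q(X=x_{0})=1$ and hence $Q(x,y)=0$ whenever $x\neq x_{0}$. The only possibly nonzero entries $Q(x_{0},y)$ are then determined by the $Y$-marginal constraint $Q(x_{0},y)=P(Y=y|T=t)$, so $Q$ is unique. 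The case $|\Ycal_{t}|=1$ is symmetric, and this direction presents no difficulty.

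The substance lies in (1) $\Rightarrow$ (2), which I would establish by contraposition: if both $\Xcal_{t}$ and $\Ycal_{t}$ contain at least two elements, I exhibit two distinct members of $\Delta_{P,t}$. The natural starting point is the product distribution $Q_{0,t}(x,y)=P(X=x|T=t)\,P(Y=y|T=t)$, which by Lemma~\ref{lem:suppDeltaPt} lies in $\Delta_{P,t}$ and has full support $\Xcal_{t}\times\Ycal_{t}$. Choosing distinct $x,x'\in\Xcal_{t}$ and distinct $y,y'\in\Ycal_{t}$, I would perturb $Q_{0,t}$ along the planar analogue $\delta_{x,y}+\delta_{x',y'}-\delta_{x,y'}-\delta_{x',y}$ of the kernel vector $\gamma_{t;x,x';y,y'}$ from~\eqref{eq:def-gamma}, where $\delta_{x,y}$ denotes the Dirac measure on $\Xcal\times\Ycal$ at $(x,y)$. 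A direct check shows this perturbation has vanishing $X$- and $Y$-marginals and total mass zero, so for small $\varepsilon>0$ the distribution $Q_{0,t}+\varepsilon(\delta_{x,y}+\delta_{x',y'}-\delta_{x,y'}-\delta_{x',y})$ again belongs to $\Delta_{P,t}$ yet differs from $Q_{0,t}$. Hence $\Delta_{P,t}$ is not a singleton.

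The only point demanding care is the nonnegativity of the perturbed distribution, and this is precisely where Lemma~\ref{lem:suppDeltaPt} does the essential work: since $Q_{0,t}$ assigns strictly positive mass to each of the four cells of $\{x,x'\}\times\{y,y'\}$, a sufficiently small $\varepsilon>0$ keeps every entry nonnegative. I expect no genuine obstacle beyond verifying this elementary positivity bound, so the proof reduces to the product-structure perturbation argument together with the support result already in hand.
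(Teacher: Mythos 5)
Your proof is correct and takes essentially the same approach as the paper's: the paper's one-line argument observes that condition 2 holds precisely when no multiple of a kernel vector $\gamma_{t;x,x';y,y'}$ can be added to any $Q\in\Delta_{P}$, which is exactly your perturbation of $Q_{0,t}$ in one direction and your marginal-pinning argument in the other. You have simply written out explicitly (including the positivity check that Lemma~\ref{lem:suppDeltaPt} makes possible) what the paper leaves implicit.
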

\begin{proof}
  Condition~2.\ in the lemma captures precisely when it is not possible to add a multiple of some $\gamma_{t;x,x';y,y'}$
  to~$P$ or, in fact, to any~$Q\in\Delta_{P}$ (cf. Remark~\ref{rem:Markovbasis}).
\end{proof}

\section{Support and uniqueness of the optimum}
\label{sec:support-uniqueness-optimum}

This section studies the uniqueness of the optimizer and the question, when it lies on the boundary of~$\Delta_{P}$.
There are many relations between uniqueness and support of the optimizers:
Lemma~\ref{lem:uniqueness_and_support} states that, if the optimizer is not unique, then there are
optimizers with restricted support.  Theorems~\ref{thm:non-uniqueness-cardinalities},
\ref{thm:non-uniqueness-cardinalities-CI}, \ref{thm:non-uniqueness}
and~\ref{thm:non-uniqueness-v2} prove that either the optimizer lies at the boundary or it is not
unique under a variety of different assumptions that involve the cardinalities of $|\Xcal|$,
$|\Ycal|$ and $|\Tcal|$ or conditional independence conditions.

\begin{lemma}
  \label{lem:uniqueness_and_support}
  If the optimizer is not unique, then there exists an optimizer on the boundary of $\Delta_{P}$.
\end{lemma}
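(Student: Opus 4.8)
The plan is to combine the convexity of the objective with the real-analyticity of mutual information on the positive part of the simplex. First I would recall that, by~\eqref{eq:optimization}, the quantity $H(T|Y)$ is constant on $\Delta_{P}$ (the $(Y,T)$-marginal is fixed), so minimizing $f(Q):=I_{Q}(T:X|Y)$ is the same as maximizing the conditional entropy $Q\mapsto H_{Q}(T|X,Y)$. The latter is concave in $Q$ (writing each summand as the perspective $g(a,b)=-a\log a+a\log b$ of $t\mapsto -t\log t$ shows $H(T|X,Y)$ is concave on all of $\Delta_{\Tcal,\Xcal,\Ycal}$), hence $f$ is convex on $\Delta_{P}$ \citep{bertschinger2014quantifying}. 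Consequently the set $M$ of optimizers is convex, and if the optimizer is not unique there are two distinct points $Q_{1}\neq Q_{2}\in M$; convexity forces $f$ to be constant, equal to $\min f$, on the whole segment $[Q_{1},Q_{2}]\subseteq M$.

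Next I would split into two cases according to whether $[Q_{1},Q_{2}]$ meets the relative interior of $\Delta_{P}$. If the open segment $(Q_{1},Q_{2})$ is contained in the boundary of $\Delta_{P}$, then $Q_{1}$ is itself an optimizer lying on the boundary and we are done. Otherwise $[Q_{1},Q_{2}]$ contains a relative-interior point, and I would extend the line $\ell$ through $Q_{1},Q_{2}$ to its full intersection with $\Delta_{P}$. Since $\Delta_{P}$ is a compact convex polytope, $\ell\cap\Delta_{P}$ is a nondegenerate segment $[A,B]$ with $A,B$ on the (relative) boundary of $\Delta_{P}$, and $\operatorname{relint}([A,B])=(A,B)=\ell\cap\operatorname{relint}(\Delta_{P})$.

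The crux is to propagate the constancy of $f$ from $[Q_{1},Q_{2}]$ to the endpoints $A,B$. On $(A,B)$ every coordinate $Q(t,x,y)$ with $(t,x,y)\in\supp(\Delta_{P})$ is strictly positive (Lemma~\ref{lem:suppDeltaPt}), so the restriction $g:=f|_{\ell}$ is a finite sum of terms $q\log q$ in strictly positive arguments and is therefore real-analytic on $(A,B)$. By the previous paragraph $g$ is constant on the nondegenerate subinterval $[Q_{1},Q_{2}]\cap(A,B)$, so the identity theorem for real-analytic functions gives $g\equiv\min f$ on all of $(A,B)$. Finally, $f$ is continuous on the compact set $\Delta_{P}$ (with the convention $0\log 0=0$), whence $f(A)=f(B)=\min f$, exhibiting optimizers $A,B$ on the boundary of $\Delta_{P}$.

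The main obstacle I anticipate is the boundary behavior: $f$ fails to be analytic precisely where probabilities vanish, i.e.\ at $A$ and $B$, so the identity theorem can only be invoked on the open analytic portion $(A,B)$ and its conclusion must be carried to the endpoints by a separate continuity argument. Care is also needed to guarantee that $(A,B)$ genuinely lies in the relative interior of $\Delta_{P}$, so that the generic support is the full $\supp(\Delta_{P})$ and none of the arguments of the logarithms degenerates along the open segment; this is exactly the point at which the explicit description of $\supp(\Delta_{P})$ from Lemma~\ref{lem:suppDeltaPt} is used.
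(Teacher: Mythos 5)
Your proof is correct and takes essentially the same route as the paper's: convexity of $I_{Q}(T:X|Y)$ forces constancy on the segment between two distinct optimizers, real-analyticity on the relative interior of the chord $\ell\cap\Delta_{P}$ together with the identity theorem (the paper's ``principle of permanence'') propagates that constancy along the whole chord, and continuity transfers optimality to the two boundary endpoints. Your writeup is slightly more careful than the paper's on two points it leaves implicit (the degenerate case where the segment lies in the boundary, and the role of $\supp(\Delta_{P})$ in guaranteeing analyticity on the open chord), but the argument is the same.
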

\begin{proof}
  Suppose that there are two distinct optimizers $Q_{1},Q_{2}\in\Delta_{P}$, and assume that neither
  $Q_{1}$ nor $Q_{2}$ lies on the boundary of $\Delta_{P}$.  By convexity of the target function
  $I_{Q}(T:X|Y)$ on $\Delta_{P}$ (see Lemma~4 in \citep{bertschinger2014quantifying} or
  Lemma~\ref{lem:H-concave} below), the convex hull of $Q_{1}$ and $Q_{2}$ consists of optimizers.
  Let $L_{Q_{1},Q_{2}}$ be the line through $Q_{1},Q_{2}$.  The target function $I_{Q}(T:X|Y)$ is a
  continuous function on the line segment $L_{Q_{1},Q_{2}}\cap\Delta_{P}$, and it is analytic on the
  relative interior of this line segment.  By assumption, $I_{Q}(T:X|Y)$ is constant on the part of
  $L_{Q_{1},Q_{2}}$ between $Q_{1}$ and~$Q_{2}$.  By the principle of permanence, $I_{Q}(T:X|Y)$ is
  constant on $L_{Q_{1},Q_{2}}\cap\Delta_{P}$.  Therefore, the two points where $L_{Q_{1},Q_{2}}$
  intersect the boundary of $\Delta_{P}$ are optimizers of $I_{Q}(T:X|Y)$ that lie on the boundary
  of~$\Delta_{P}$.
\end{proof}

The derivative of $I_{Q}(T:X|Y)$ in the direction of $\gamma_{t;x,x';y,y'}$ at $Q$ equals
\begin{equation}
  \label{eq:pd}
  \log\left(\frac{Q(t,x,y)Q(t,x',y')}{Q(t,x,y')Q(t,x',y)}\cdot\frac{Q(x,y')Q(x',y)}{Q(x,y)Q(x',y')}\right)
  = \log\left(\frac{Q(t|x,y)Q(t|x',y')}{Q(t|x,y')Q(t|x',y)}\right),
\end{equation}
assuming that the probabilities in the logarithm are positive.  Otherwise, the partial derivative has to be computed as
a limit.

\begin{remark}
  \label{rem:sum-of-rk-one}
  The vanishing of the directional derivative of $I_{Q}(T:X|Y)$ can be seen as a determinantal
  condition: all derivatives~\eqref{eq:pd} vanish if and only if for all $t\in\Tcal'$ the determinants
  of all $2\times2$-submatrices of the matrix $(Q(t|x,y))_{x,y}\in\Rb^{\Xcal\times\Ycal}$ vanish;
  that is, if and only if these matrices have rank one.
  As $\sum_{t\in\Tcal'}Q(t|x,y)=1$ for all~$x,y$, the sum of these rank-one matrices is again of
  rank one.

  Conversely, let $\tilde Q_{1},\dots,\tilde Q_{k}$ be non-negative rank-one matrices such that the
  sum $\tilde Q = \tilde Q_{1}+\dots+\tilde Q_{k}$ is non-zero and again of rank one; say $\tilde Q=vw^{\trans}$
  with $v,w$ non-negative and \(\trans\) denoting the transpose.  Let $V=\diag(v)$, $W=\diag(w)$, and let $Q_{t}=V^{-1}\tilde Q_{t}W^{-1}$ for
  $t=1,\dots,k$.  Then $Q_{1}+\dots+Q_{k}=V^{-1}\tilde QW^{-1}$ is the matrix with all entries equal
  to one.  Thus, the matrices $Q_{t}$ for $t=1,\dots,k$ can be interpreted as matrices of
  conditional probabilities~$Q(t|X,Y)$.  Together with any distribution of the pair $(X,Y)$, one
  obtains a distribution $Q(T,X,Y)$ at which all directional derivatives of $I_{Q}(T:X|Y)$ vanish.
\end{remark}
\begin{lemma}
  \label{lem:maxi-boundary}
  Let $Q^{*}$ be a minimizer of $I_{Q}(T:X|Y)$ for $Q\in\Delta_{P}$, and
  let $(t,x,y)\in\supp(\Delta_{P})$.
  If $Q^{*}(t,x,y) = 0$, then $Q^{*}(x,y) = 0$.  Thus, $Q^{*}(t',x,y) = 0$ for all $t'\in\Tcal$.
\end{lemma}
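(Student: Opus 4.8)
The plan is to argue by contradiction: assume $Q^{*}(t,x,y)=0$ while $Q^{*}(x,y)>0$, and exhibit a feasible direction along which the objective strictly improves, contradicting the minimality of $Q^{*}$.

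First I would collect the positivity that the hypothesis provides. Since $(t,x,y)\in\supp(\Delta_{P})$, Lemma~\ref{lem:suppDeltaPt} gives $t\in\Tcal'$, $x\in\Xcal_{t}$ and $y\in\Ycal_{t}$, so that $P(T=t,X=x)>0$ and $P(T=t,Y=y)>0$. As every $Q\in\Delta_{P}$ shares these pair marginals, the row sum $\sum_{y''}Q^{*}(t,x,y'')=P(T=t,X=x)$ and the column sum $\sum_{x''}Q^{*}(t,x'',y)=P(T=t,Y=y)$ are strictly positive; because the entry $Q^{*}(t,x,y)$ itself vanishes, there must be some $y'\neq y$ with $Q^{*}(t,x,y')>0$ and some $x'\neq x$ with $Q^{*}(t,x',y)>0$.

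Next I would move along the ray $Q^{*}+\epsilon\,\gamma_{t;x,x';y,y'}$ from \eqref{eq:def-gamma} using these $x',y'$. A small positive multiple raises the entries at $(t,x,y)$ and $(t,x',y')$ and lowers those at $(t,x,y')$ and $(t,x',y)$; since the latter two are strictly positive, the ray stays in $\Delta_{P}$ for all small $\epsilon>0$. By \eqref{eq:optimization}, $H(T|Y)$ is constant on $\Delta_{P}$, so $Q^{*}$ maximizes $H(T|X,Y)=H(T,X,Y)-H(X,Y)$, and the heart of the argument is to show that the one-sided derivative of this entropy along the ray is $+\infty$. Because $\gamma$ preserves every marginal, the distributions $Q(Y)$ and $Q(T=t,Y)$ do not move, and only the four joint cells together with the pair-marginals $Q(x,y),Q(x',y'),Q(x,y'),Q(x',y)$ change. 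Writing $H(T,X,Y)-H(X,Y)$ as sums of $-p\log p$, the only nonsmooth contributions near $\epsilon=0$ are $\pm\epsilon\log\epsilon$ terms from cells that leave the value $0$: the cell $(t,x,y)$ enters $H(T,X,Y)$ as $-\epsilon\log\epsilon$, whose derivative tends to $+\infty$, while the two decreasing cells and the marginal $Q(x,y)$ stay bounded away from $0$ and are smooth.

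The step I expect to be delicate is ruling out a competing $-\infty$ in this derivative, which could only arise from the marginal $Q(x',y')$ when $Q^{*}(x',y')=0$. In that case $Q^{*}(t,x',y')=0$ as well, so along the ray $Q(t,x',y')=Q(x',y')=\epsilon$, and their singular terms enter $H(T,X,Y)$ and $-H(X,Y)$ with opposite signs and cancel exactly; if instead $Q^{*}(x',y')>0$, the marginal term is smooth and the cell $(t,x',y')$ only reinforces the $+\infty$. Hence the surviving singular term is the favorable one from $(t,x,y)$, so $H(T|X,Y)$ strictly increases for small $\epsilon>0$, contradicting the maximality of $Q^{*}$. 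Therefore $Q^{*}(x,y)=0$, and since $0\le Q^{*}(t',x,y)\le Q^{*}(x,y)$ for every $t'\in\Tcal$, the final assertion $Q^{*}(t',x,y)=0$ follows.
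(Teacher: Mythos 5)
Your proof is correct and takes essentially the same route as the paper's: both argue by contradiction, perturb along the same direction $\gamma_{t;x,x';y,y'}$ (with $x',y'$ supplied by the positive row and column sums of the fixed pair marginals), and derive a contradiction from an unboundedly favorable one-sided derivative, with the delicate point in both cases being the possible vanishing of $Q^{*}(x',y')$. The only difference is bookkeeping: the paper tracks powers of $\epsilon$ inside the log-ratio derivative formula \eqref{eq:pd} evaluated at $Q_{\epsilon}$, whereas you track the $\epsilon\log\epsilon$ singular terms of $H(T,X,Y)-H(X,Y)$ and note their exact cancellation at $(x',y')$ --- which is precisely what the paper's $O(\epsilon)$ computation encodes.
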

\begin{proof}
  Suppose that $Q^{*}(t,x,y) = 0$, but that $Q^{*}(x,y)>0$.  Then there exist $x',y'$ such that $Q_{\epsilon} := Q^{*} +
  \epsilon\gamma_{t;x,x';y,y'}$ is non-negative for $\epsilon>0$ small enough (and thus $Q_{\epsilon}\in\Delta_{P}$).  In particular, $Q^{*}(t,x',y), Q^{*}(t,x,y')>0$.

  Since $Q^{*}$ is a minimizer, the partial derivative~\eqref{eq:pd} at~$Q^{*}$ must be non-negative.  Note that, by
  assumption, $Q^{*}(t,x,y) = 0$.  If all four probabilities in the denominator of the fraction in the logarithm were
  non-zero, then the partial derivative would be equal to minus infinity.  Thus, either $Q^{*}(x,y)$ or $Q^{*}(x',y')$ must
  vanish.

  Suppose that $Q^{*}(x,y)>0$.  Then $Q^{*}(x',y')=0$.  Hence, $Q^{*}(t,x',y')=0$, and so
  \begin{multline*}
    \frac{Q_{\epsilon}(t,x,y)Q_{\epsilon}(t,x',y')}{Q_{\epsilon}(t,x,y')Q_{\epsilon}(t,x',y)}\cdot\frac{Q_{\epsilon}(x,y')Q\epsilon(x',y)}{Q_{\epsilon}(x,y)Q_{\epsilon}(x',y')}
    \\
    = \frac{\epsilon^{2}Q_{\epsilon}(x,y')Q_{\epsilon}(x',y)}{Q_{\epsilon}(t,x,y')Q_{\epsilon}(t,x',y)Q_{\epsilon}(x,y)\epsilon}
    = O(\epsilon).
  \end{multline*}
  Thus, the partial derivative diverges as $\log(\epsilon)$ to $-\infty$ as $\epsilon\to0$, contradicting the fact that
  $Q^{*}$ is a local minimizer.  Therefore, $Q^{*}(x,y)=0$.
\end{proof}

If $Q^{*}(t,x,y)=0$ and $Q^{*}(t,x',y)>0$, $Q^{*}(t,x,y')>0$ for some $t\in\Tcal'$, $x,x'\in\Xcal$, $y,y'\in\Ycal$, then the partial derivative at $Q^{*}$ in the direction of $\gamma_{t;x,x';y,y'}$ is
\begin{equation*}
  \log\left(
    \frac{Q^{*}(t,x',y')Q^*(x,y')Q^*(x',y)}{Q^*(t,x,y')Q^*(t,x',y) Q^*(x',y')}
  \right).
\end{equation*}
Therefore,
\begin{equation*}
  Q^{*}(t,x',y')Q^*(x,y')Q^*(x',y) \ge Q^*(t,x,y')Q^*(t,x',y) Q^*(x',y'),
\end{equation*}
or
\begin{equation*}
  \frac{Q^{*}(t,x',y')}{Q^*(x',y')} \ge \frac{Q^*(t,x,y')}{Q^{*}(x,y')}\frac{Q^*(t,x',y)}{Q^{*}(x',y)}.
\end{equation*}

% \section{Uniqueness}
% \label{sec:uniqueness}

It is well known that entropy is strictly concave and that conditional entropy is concave.  From the proof of this fact,
it is easy to analyze where conditional entropy is strictly concave.
\begin{lemma}
  \label{lem:H-concave}
  The conditional entropy $H(A|B)$ is concave in the joint distribution of~$A,B$.  It is strictly concave, with the
  exception of those directions where $P(A|B)$ is constant.  That is:
  \begin{equation*}
    \lambda H_{P_{1}}(A|B) + (1-\lambda) H_{P_{2}}(A|B) \le H_{\lambda P_{1} + (1-\lambda) P_{2}}(A|B)
  \end{equation*}
  with equality if and only if~$P_{1}(A|B) = P_{2}(A|B)$ a.e.
\end{lemma}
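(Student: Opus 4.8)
The plan is to reduce the concavity of conditional entropy to the known strict concavity of ordinary (Shannon) entropy, by writing $H(A|B)$ as a weighted average of entropies of the conditional distributions $P(A|B=b)$. The key identity is
\[
  H_P(A|B) = \sum_b P(B=b)\, H\big(P(A|B=b)\big),
\]
where $H(\cdot)$ on the right is the entropy of a probability vector. This is a sum of concave functions of $P$, but the dependence on $P$ is not immediately linear because both the weights $P(B=b)$ and the conditional distributions $P(A|B=b)$ vary with $P$. The standard trick to handle this is the substitution $P(A=a,B=b) = P(B=b)\,P(A=a|B=b)$, which linearizes the problem once one observes that $H(A|B)$ is a \emph{perspective} of the entropy function in each block.

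**First I would** recall the perspective construction: for a convex function $f$, the map $(s,u)\mapsto s\,f(u/s)$ is jointly convex on $\{s>0\}$, and dually $s\,g(u/s)$ is jointly concave when $g$ is concave. Applying this with $g$ the entropy function to each value $b$, the contribution $P(B=b)\,H\big(P(A|B=b)\big)$ is exactly $s_b\,H(u_b/s_b)$ where $s_b=P(B=b)$ and $u_b=\big(P(A=a,B=b)\big)_a$ is the corresponding block of the joint distribution. Since each block depends \emph{linearly} on $P$, and the perspective of a concave function is jointly concave, each summand is a jointly concave function of $P$; summing over $b$ preserves concavity. This gives the inequality
\[
  \lambda H_{P_1}(A|B) + (1-\lambda) H_{P_2}(A|B) \le H_{\lambda P_1 + (1-\lambda)P_2}(A|B).
\]

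**The main work** is the equality characterization, and this is where I expect the chief obstacle to lie. Because $H(\cdot)$ is \emph{strictly} concave on the probability simplex, the perspective $s_b\,H(u_b/s_b)$ is strictly concave in the relevant directions, but its perspective inherits a degeneracy precisely along the ray directions $u_b\propto s_b$, i.e.\ where the conditional distribution $P(A|B=b)$ is held fixed while the marginal weight is scaled. I would make this precise by analyzing equality block by block: equality in the sum forces equality in every summand with positive total weight, and for the perspective of a strictly concave function, equality at an interior segment holds if and only if $u_b^{(1)}/s_b^{(1)} = u_b^{(2)}/s_b^{(2)}$, that is, $P_1(A|B=b) = P_2(A|B=b)$. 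Collecting these over all $b$ with $P_1(B=b)+P_2(B=b)>0$ yields the stated condition $P_1(A|B)=P_2(A|B)$ almost everywhere.

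**The one point requiring care** is the behavior at the boundary, where some $P_i(B=b)=0$; there the summand vanishes and imposes no constraint, which is exactly why the equality condition is stated ``a.e.'' rather than everywhere. I would dispose of this by noting that such $b$ contribute zero to both sides and that the convex combination $\lambda P_1+(1-\lambda)P_2$ assigns positive weight to $b$ only if at least one $P_i$ does, in which case the perspective argument applies on the nonvanishing part. An alternative, more computational route would bypass the perspective machinery entirely: one writes out $H_{\lambda P_1+(1-\lambda)P_2}(A|B) - \lambda H_{P_1}(A|B) - (1-\lambda)H_{P_2}(A|B)$ and recognizes it, via the log-sum inequality or a direct KL-divergence rewriting, as a nonnegative combination of relative entropies that vanishes exactly when the conditionals agree; but the perspective argument is cleaner and makes the strictness transparent, so I would present that as the primary proof.
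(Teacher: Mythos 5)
Your proof is correct, and it takes a genuinely different route from the paper's. The paper's proof is an auxiliary-variable argument: it introduces a Bernoulli indicator $\theta$ of the mixture component, with joint law $P(A,B,\theta=0)=\lambda P_{1}(A,B)$ and $P(A,B,\theta=1)=(1-\lambda)P_{2}(A,B)$; then the mixture's conditional entropy is $H_{P}(A|B)$, the convex combination of the two conditional entropies is $H_{P}(A|B,\theta)$, the inequality is exactly ``conditioning reduces entropy,'' and the equality case is $I_{P}(A:\theta\mid B)=0$, i.e.\ $P_{1}(A|B)=P(A|B,\theta=0)=P(A|B,\theta=1)=P_{2}(A|B)$. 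You instead decompose $H(A|B)=\sum_{b}s_{b}H(u_{b}/s_{b})$ block by block and invoke the joint concavity of the perspective of the strictly concave entropy function, reading off the equality case from the ray degeneracy of perspectives. Both arguments are sound, including your block-wise equality analysis (each summand satisfies the inequality separately, so global equality forces blockwise equality, and for blocks with both marginal weights positive this forces equal conditionals). What the paper's approach buys is brevity: the equality case of conditioning-reduces-entropy is quoted as a known fact, though its standard proof is essentially an argument of your type. What yours buys is self-containedness as pure convex analysis, plus a geometrically explicit description of the flat directions as the rays $u_{b}\propto s_{b}$, i.e.\ the directions along which $P(A|B)$ is held fixed --- which is precisely the form in which the lemma is used later, in Lemma~\ref{lem:unique-conditional} and in the proof of Theorem~\ref{thm:non-uniqueness-cardinalities}. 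Your explicit treatment of blocks with $P_{i}(B=b)=0$ is also more careful than the paper's, which leaves that case (and hence the ``a.e.'' qualifier in the statement) implicit.
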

\begin{proof}
  Let $\theta$ be a Bernoulli random variable with parameter~$\lambda$, and consider the joint distribution $P$ of
  $\theta$, $A$ and~$B$ given by
  \begin{equation*}
    P(A,B,\theta) = \begin{cases}
      \lambda P_{1}(A,B), & \text{ if }\theta = 0, \\
      (1 - \lambda) P_{2}(A,B), & \text{ if }\theta = 1.
    \end{cases}
  \end{equation*}
  Then
  \begin{multline*}
    H_{\lambda P_{1} + (1-\lambda) P_{2}}(A|B)
    = H_{P}(A|B)
    \ge H_{P}(A|B,\theta) \\
    = \lambda H_{P_{1}}(A|B) + (1-\lambda) H_{P_{2}}(A|B).
  \end{multline*}
  Equality holds if and only if $A$ is independent of $\theta$ given~$B$; that is:
  \begin{equation*}
    P_{1}(A|B) = P(A|B,\theta = 0) = P(A|B,\theta = 1) = P_{2}(A|B). \qedhere
  \end{equation*}
\end{proof}
\begin{lemma}
  \label{lem:unique-conditional}
  Let $Q_{1},Q_{2}\in\Delta_{P}$ be two maximizers of $\max_{Q\in\Delta_{P}}H_{Q}(T|XY)$.  Then $Q_{1}(T|XY) = Q_{2}(T|XY)$.
\end{lemma}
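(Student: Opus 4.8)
The plan is to derive this directly from the strict-concavity analysis of Lemma~\ref{lem:H-concave}, applied with $A = T$ and $B = (X,Y)$, combined with convexity of the feasible set~$\Delta_{P}$.

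First I would record that $\Delta_{P}$ is convex, so for any $\lambda\in(0,1)$ the mixture $Q_{\lambda} := \lambda Q_{1} + (1-\lambda) Q_{2}$ again lies in~$\Delta_{P}$. Writing $M := \max_{Q\in\Delta_{P}} H_{Q}(T|XY)$ for the optimal value, I would then invoke the concavity inequality of Lemma~\ref{lem:H-concave}:
\[
  \lambda H_{Q_{1}}(T|XY) + (1-\lambda) H_{Q_{2}}(T|XY) \le H_{Q_{\lambda}}(T|XY).
\]
Since $Q_{1}$ and $Q_{2}$ are both maximizers, the left-hand side equals~$M$; since $Q_{\lambda}\in\Delta_{P}$, the right-hand side is at most~$M$. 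Hence both sides equal $M$ and equality holds in the concavity inequality.

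The key step is then to read off the conclusion from the equality case of Lemma~\ref{lem:H-concave}, which asserts that equality forces $Q_{1}(T|XY) = Q_{2}(T|XY)$ a.e. This is precisely the claim. The only point requiring a little care is the meaning of ``a.e.'': the conditional $Q_{i}(T\mid X=x,Y=y)$ is defined only where $Q_{i}(X=x,Y=y)>0$, and the equality condition pins down the conditionals on the support of the mixture~$Q_{\lambda}$, i.e.\ on all $(x,y)$ with $Q_{1}(x,y)>0$ or $Q_{2}(x,y)>0$; on the common support this yields literal equality of the conditional distributions. Thus no genuine obstacle arises, and the entire argument reduces to a direct application of the preceding lemma together with convexity of~$\Delta_{P}$.
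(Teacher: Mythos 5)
Your proof is correct and follows essentially the same route as the paper: both arguments combine convexity of $\Delta_{P}$ with the concavity of $H_{Q}(T|XY)$ to show that the concavity inequality of Lemma~\ref{lem:H-concave} is tight between $Q_{1}$ and $Q_{2}$, and then invoke that lemma's equality case to conclude $Q_{1}(T|XY)=Q_{2}(T|XY)$. Your extra remark about where the conditionals are defined is a harmless (and slightly more careful) elaboration of the ``a.e.'' qualifier already present in Lemma~\ref{lem:H-concave}.
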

\begin{proof}
  We may assume that $Q_{1}\neq Q_{2}$.
  By assumption, $H_{Q}(T|XY)$ is constant on the line segment between $Q_{1}$ and~$Q_{2}$.
  Thus, on this line segment $H_{Q}(T|XY)$ is not strictly concave.
  By Lemma~\ref{lem:H-concave}, $Q_{1}(T|XY) = Q_{2}(T|XY)$.
\end{proof}

The following four theorems give different sufficient conditions for non-uniqueness of the optimizer.

\begin{thm}
  \label{thm:non-uniqueness-cardinalities}
  Suppose that $|\Tcal|<\min\big\{|\Xcal|,|\Ycal|\big\}$.  If there exists an optimizer of
  $\max_{Q\in\Delta_{P}} H_{Q}(T|XY)$ with full support, then the optimizer is not unique.
\end{thm}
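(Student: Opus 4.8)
The plan is to use the first-order optimality conditions at a full-support optimizer to pin down a rank-one structure on the conditionals, and then to manufacture a nontrivial family of optimizers by perturbing only the $(X,Y)$-marginal while keeping those conditionals fixed; the cardinality hypothesis will enter exactly when I count the available perturbation directions. Concretely, since the hypothesized optimizer $Q^{*}$ has full support, it lies in the relative interior of $\Delta_{P}$, so every directional derivative \eqref{eq:pd} vanishes at $Q^{*}$. By Remark~\ref{rem:sum-of-rk-one}, each matrix $M_{t}:=(Q^{*}(t|x,y))_{x,y}\in\Rb^{\Xcal\times\Ycal}$ then has rank one, so I can write $M_{t}=v_{t}w_{t}^{\trans}$ with $v_{t}\in\Rb^{\Xcal}$ and $w_{t}\in\Rb^{\Ycal}$ entrywise positive, while $\sum_{t}M_{t}$ is the all-ones matrix.

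Next I would parameterize the distributions that share these conditionals. For a distribution $r$ on $\Xcal\times\Ycal$ set $Q_{r}(t,x,y):=M_{t}(x,y)\,r(x,y)$; then $Q_{r}(t|x,y)=M_{t}(x,y)$ whenever $r(x,y)>0$, and a short computation gives $H_{Q_{r}}(T|XY)=\sum_{x,y}r(x,y)\,h(x,y)$ where $h(x,y):=-\sum_{t}M_{t}(x,y)\log M_{t}(x,y)$, so the objective is \emph{linear} in $r$. Because the $(v_{t})_{x}$ and $(w_{t})_{y}$ are positive, the membership $Q_{r}\in\Delta_{P}$ reduces to the linear conditions that $\sum_{y}(w_{t})_{y}r(x,y)$ and $\sum_{x}(v_{t})_{x}r(x,y)$ equal their values at $r^{*}:=Q^{*}(X,Y)$ for all $t$.

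The heart of the argument, and the step I expect to be the main obstacle, is producing a nonzero admissible perturbation direction $d$. Such a $d$ must have every row orthogonal to $W:=\operatorname{span}\{w_{t}\}$ and every column orthogonal to $V:=\operatorname{span}\{v_{t}\}$. The space of such matrices contains $V^{\perp}\otimes W^{\perp}$, whose dimension is at least $(|\Xcal|-\dim V)(|\Ycal|-\dim W)\ge(|\Xcal|-|\Tcal|)(|\Ycal|-|\Tcal|)\ge 1$, since $\dim V,\dim W\le|\Tcal|$; the last inequality holds precisely because $|\Tcal|<\min\{|\Xcal|,|\Ycal|\}$. This is where the cardinality hypothesis is used in an essential way, and recognizing the constraint kernel as this tensor product is the key non-routine observation.

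Finally I would verify that a nonzero $d$ of this kind yields a genuine second optimizer. Summing the $(X,T)$-marginal constraints and using $\sum_{t}M_{t}\equiv 1$ gives $\sum_{x,y}d(x,y)=0$, so $r^{*}+\epsilon d$ is again a probability distribution and, since $r^{*}$ is strictly positive, stays nonnegative for small $\epsilon>0$; thus $Q_{r^{*}+\epsilon d}\in\Delta_{P}$ with $Q_{r^{*}+\epsilon d}\neq Q^{*}$. Because $H_{Q_{r}}(T|XY)$ is linear in $r$ and $Q^{*}$ maximizes it over $\Delta_{P}$, testing the maximality at both $r^{*}+\epsilon d$ and $r^{*}-\epsilon d$ forces $\sum_{x,y}d(x,y)h(x,y)=0$, so $Q_{r^{*}+\epsilon d}$ attains the same objective value and is a second optimizer, proving non-uniqueness. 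As a consistency check, Lemma~\ref{lem:unique-conditional} guarantees that every optimizer shares the conditional $Q^{*}(T|XY)$, so this construction in fact sweeps out the entire optimal set.
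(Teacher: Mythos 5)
Your proposal is correct and takes essentially the same route as the paper's proof: both use the rank-one factorizations $Q^{*}(t|x,y)=v_{t}w_{t}^{\trans}$ from Remark~\ref{rem:sum-of-rk-one}, use the cardinality hypothesis to produce directions orthogonal to all $v_{t}$ and all $w_{t}$, and perturb the $(X,Y)$-marginal while holding the conditional $Q^{*}(T|X,Y)$ fixed so that the pair marginals are preserved and the objective stays constant. The only cosmetic differences are that the paper perturbs along the single rank-one direction $v_{0}w_{0}^{\trans}$ and invokes Lemma~\ref{lem:H-concave}, whereas you allow a general element of $V^{\perp}\otimes W^{\perp}$ and argue via the explicit linearity of the objective in the $(X,Y)$-marginal.
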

\begin{proof}
  Suppose that $Q^{*}\in\arg\max_{Q\in\Delta_{P}} H_{Q}(T|XY)$ has full support.
  The proof proceeds by finding a direction within $\Delta_{P}$ in which $H_{Q}(T|XY)$ is not
  strictly concave.
  Consider the linear equation
  \begin{equation}
    \label{eq:cond-dist}
    Q(t,x,y) = Q^{*}(t|x,y) Q(x,y)
    \quad
    \text{ for }Q\in\Delta_{P}.
  \end{equation}
  If $Q'\in\Delta_{P}$ solves this equation, then, by Lemma~\ref{lem:H-concave}, the function
  $H_{Q}(T|X,Y)$ is affine on the line connecting $Q^{*}$ and $Q'$.  Since $Q^{*}$ is a maximizer,
  $H_{Q}(T|X,Y)$ is constant on this line, whence any point on this line is a maximizer.  Thus, to
  prove the theorem, it suffices to show that there exists a solution~$Q'\neq Q*$ in~$\Delta_{P}$
  to~\eqref{eq:cond-dist}.

  By Remark~\ref{rem:sum-of-rk-one}, for every $t\in\Tcal'$, there exists a pair of non-negative
  vectors $v_{t},w_{t}$ such that  $Q^{*}(t|x,y) = v_{t}w_{t}^{\trans}$.  The assumption
  $|\Tcal|<\min\big\{|\Xcal|,|\Ycal|\big\}$ implies that there exist non-zero $v_{0}\in\Rb^{\Xcal}$,
  $w_{0}\in\Rb^{\Ycal}$ with $v_{0}^{\trans}v_{t}=0=w_{0}^{\trans}w_{t}$ for all~$t\in\Tcal'$.  For
  $\epsilon\in\Rb$ let
  \begin{equation*}
    Q_{\epsilon}(x,y) := Q^{*}(x,y) + \epsilon v_{0,x}^{\trans}w_{0,y}.
  \end{equation*}
  Then
  \begin{equation*}
    \sum_{x\in\Xcal,y\in\Ycal}Q_{\epsilon}(x,y) =
    \sum_{x\in\Xcal,y\in\Ycal}Q^{*}(x,y) + \epsilon\sum_{x\in\Xcal,y\in\Ycal}v_{0,x}w_{0,y} = 1,
  \end{equation*}
  because
  \begin{multline*}
    \sum_{x\in\Xcal,y\in\Ycal}v_{0,x}w_{0,y}
    = \sum_{x\in\Xcal,y\in\Ycal}v_{0,x}w_{0,y}\sum_{t\in\Tcal'}Q^{*}(t|x,y)
    \\
    = \sum_{t\in\Tcal'} \sum_{x\in\Xcal}v_{0,x} v_{t,x}\sum_{y\in\Ycal}w_{0,y}w_{t,y} = 0.
  \end{multline*}
  Therefore, if $\epsilon$ is sufficiently close to zero, then $Q_{\epsilon}$ defines a probability
  distribution for $X$ and~$Y$.

  Extend $Q_{\epsilon}$ to a joint distribution of $T,X,Y$ by
  $Q_{\epsilon}(t,x,y) = Q^{*}(t|x,y) Q_{\epsilon}(x,y)$.  Then $Q_{\epsilon}$
  satisfies~\eqref{eq:cond-dist}.  It remains to show that $Q_{\epsilon}\in\Delta_{Q^{*}}$.
  From
  \begin{multline*}
    Q_{\epsilon}(t,x) - Q^{*}(t,x) = \sum_{y\in\Ycal}\big(Q_{\epsilon}(t,x,y) - Q^{*}(t,x,y)\big)
    \\
    = \sum_{y\in\Ycal} Q^{*}(t|x,y) \big( Q_{\epsilon}(x,y) - Q^{*}(x,y) \big)
    = \epsilon v_{t,x}v_{0,x} \sum_{y\in\Ycal}w_{t,y}w_{0,y} = 0
  \end{multline*}
  follows $Q_{\epsilon}(T,X)=Q^{*}(T,X)$.  The equality $Q_{\epsilon}(T,Y)=Q^{*}(T,Y)$ follows
  similarly.
\end{proof}
\begin{thm}
  \label{thm:non-uniqueness-cardinalities-CI}
  Let $|\Tcal|<|\Ycal|$, and suppose that $UI(T:X\setminus Y) = 0$.  If there is an optimizer of
  $\max_{Q\in\Delta_{P}} H_{Q}(T|X,Y)$ with full support, then the optimizer is not unique.
\end{thm}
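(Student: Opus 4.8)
The plan is to show that the hypothesis $UI(T:X\setminus Y)=0$ forces any full-support optimizer into a rigid conditional-independence form, which then lets me mirror the perturbation argument of Theorem~\ref{thm:non-uniqueness-cardinalities} under the weaker cardinality bound $|\Tcal|<|\Ycal|$. First I would translate the hypotheses. By~\eqref{eq:optimization}, $H_{Q}(T|Y)$ is constant on $\Delta_{P}$ (it depends only on the fixed $(T,Y)$-marginal), so maximizing $H_{Q}(T|X,Y)$ is the same as minimizing $I_{Q}(T:X|Y)$, and $UI=0$ says the optimal value of the latter is $0$. Hence a full-support maximizer $Q^{*}$ satisfies $I_{Q^{*}}(T:X|Y)=0$; that is, $T$ and $X$ are conditionally independent given $Y$ under $Q^{*}$, so $Q^{*}(t|x,y)=Q^{*}(t|y)$ for every $(x,y)$ with $Q^{*}(x,y)>0$.

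Next I would feed this into the rank-one description of Remark~\ref{rem:sum-of-rk-one}. Because $Q^{*}(t|x,y)$ does not depend on $x$, each matrix $(Q^{*}(t|x,y))_{x,y}\in\Rb^{\Xcal\times\Ycal}$ has all rows equal and thus factors as $\mathbf{1}\,w_{t}^{\trans}$, where $w_{t}=(Q^{*}(t|y))_{y}$ and $\mathbf{1}$ is the all-ones vector on $\Xcal$. This is precisely the factorization $Q^{*}(t|x,y)=v_{t,x}w_{t,y}$ used in the proof of Theorem~\ref{thm:non-uniqueness-cardinalities}, but with the decisive simplification that now all left factors coincide, $v_{t}=\mathbf{1}$ for every $t\in\Tcal'$. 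This is the crux of the argument and the reason the bound on $|\Xcal|$ can be dropped.

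With this in hand I would reuse the construction of Theorem~\ref{thm:non-uniqueness-cardinalities}: I seek nonzero $v_{0}\in\Rb^{\Xcal}$ and $w_{0}\in\Rb^{\Ycal}$ with $v_{0}^{\trans}v_{t}=0=w_{0}^{\trans}w_{t}$ for all $t\in\Tcal'$. The $w$-condition reads $w_{0}\perp\operatorname{span}\{w_{t}:t\in\Tcal'\}$, and since $\dim\operatorname{span}\{w_{t}\}\le|\Tcal'|\le|\Tcal|<|\Ycal|$, such a $w_{0}\neq 0$ exists; the $v$-condition collapses to $v_{0}\perp\mathbf{1}$, for which $|\Xcal|\ge 2$ suffices. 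Then, as before, $Q_{\epsilon}(x,y):=Q^{*}(x,y)+\epsilon\,v_{0,x}w_{0,y}$ has total mass $1$ (here simply $\sum_{x,y}v_{0,x}w_{0,y}=(\sum_{x}v_{0,x})(\sum_{y}w_{0,y})=0$), stays nonnegative for small $|\epsilon|$ by full support, and its extension $Q_{\epsilon}(t,x,y)=Q^{*}(t|y)Q_{\epsilon}(x,y)$ lies in $\Delta_{P}$ and still satisfies $Q_{\epsilon}(t|x,y)=Q^{*}(t|y)$, hence $I_{Q_{\epsilon}}(T:X|Y)=0$ and $Q_{\epsilon}$ is again an optimizer. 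This yields a whole segment of optimizers through $Q^{*}$, proving non-uniqueness.

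The genuinely new point relative to Theorem~\ref{thm:non-uniqueness-cardinalities} — and the main work — is the reduction in the first two paragraphs: recognizing that $UI=0$ is equivalent to conditional independence of the optimizer and that this collapses every $v_{t}$ to the single vector $\mathbf{1}$, so that only $|\Tcal|<|\Ycal|$ rather than $|\Tcal|<\min\{|\Xcal|,|\Ycal|\}$ is required. A point worth flagging is the degenerate case $|\Xcal|=1$: then no $v_{0}\perp\mathbf{1}$ exists, but in that case $\Delta_{P,t}$, and hence $\Delta_{P}$, is a singleton by Lemma~\ref{lem:Delta_Ps_singleton}, so a full-support optimizer is trivially unique; thus $|\Xcal|\ge 2$ should be read as a standing nondegeneracy assumption. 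The remaining verifications (nonnegativity for small $\epsilon$ and the marginal identities $Q_{\epsilon}(T,X)=Q^{*}(T,X)$ and $Q_{\epsilon}(T,Y)=Q^{*}(T,Y)$, the former using $w_{0}\perp w_{t}$ and the latter using $v_{0}\perp\mathbf{1}$) go through verbatim as in the earlier proof.
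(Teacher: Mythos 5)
Your proof is correct and follows essentially the same route as the paper: the paper's proof likewise observes that $UI=0$ forces $Q^{*}(t|x,y)=Q^{*}(t|y)$, sets $v_{t}=\mathbf{1}$ and $w_{t,y}=Q^{*}(t|y)$, and reruns the perturbation construction of Theorem~\ref{thm:non-uniqueness-cardinalities}, where orthogonality to the single vector $\mathbf{1}$ only requires $|\Xcal|\ge 2$. Your explicit flagging of the degenerate case $|\Xcal|=1$ (which the paper leaves implicit) is a reasonable addition, not a deviation.
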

\begin{proof}
  The proof of Theorem~\ref{thm:non-uniqueness-cardinalities} can be adapted.  Under the assumptions
  of the theorem, if $Q^{*}$ is an optimizer, then $Q^{*}(t|x,y) = Q^{*}(t|y)$ does not depend
  on~$x$.  Therefore, one may choose $v_{t,x}=1$ for all~$y\in\Ycal,t\in\Tcal$ and
  $w_{t,y}=Q^{*}(t|y)$.  To construct $v_{0}$, it now suffices that $|\Xcal|\ge 2$, since all
  vectors $v_{t}$, $t\in\Tcal$, are identical.
\end{proof}

\begin{thm}
  \label{thm:non-uniqueness}
  Suppose that $H(X),H(Y) > 0$.  If both \(\ind[P]{T}{X}\) and \(\ind[P]{T}{Y}\),
  then \(\arg\max_{Q\in \Delta_P}{H_Q(T|X,Y)}\) is not unique.
\end{thm}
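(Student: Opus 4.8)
The plan is to exhibit the Markov-chain distribution $Q_{0}=Q_{0}(P)$ of~\eqref{eq:Q0P} as one optimizer and then perturb it along a rank-one direction to produce a second one. First I would observe that the two independence assumptions force $Q_{0}$ to be the full product distribution: since $\ind[P]{T}{X}$ and $\ind[P]{T}{Y}$ give $P(x|t)=P(x)$ and $P(y|t)=P(y)$, formula~\eqref{eq:Q0P} becomes $Q_{0}(t,x,y)=P(t)P(x)P(y)$. Hence $T$ is independent of the pair $(X,Y)$ under~$Q_{0}$, so $Q_{0}(t|x,y)=P(t)$ is constant in $(x,y)$ and $I_{Q_{0}}(T:X|Y)=0$. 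As the objective $I_{Q}(T:X|Y)$ is non-negative, this is its global minimum, so $Q_{0}$ maximizes $H_{Q}(T|X,Y)$ over $\Delta_{P}$. By Lemma~\ref{lem:suppDeltaPt}, $Q_{0}$ has maximal support, namely $\Tcal'\times\Xcal'\times\Ycal'$, where $\Xcal':=\{x:P(x)>0\}$ and $\Ycal':=\{y:P(y)>0\}$; in particular we need not assume the existence of a full-support optimizer, unlike in Theorems~\ref{thm:non-uniqueness-cardinalities} and~\ref{thm:non-uniqueness-cardinalities-CI}, because the independence identifies one explicitly.

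Next I would construct a second optimizer by the method of Theorem~\ref{thm:non-uniqueness-cardinalities}, specialized to this $Q_{0}$. In the language of Remark~\ref{rem:sum-of-rk-one}, the matrices $(Q_{0}(t|x,y))_{x,y}=P(t)\,\mathbf{1}\mathbf{1}^{\trans}$ are all proportional to the all-ones matrix, so in a factorization $Q_{0}(t|x,y)=v_{t}w_{t}^{\trans}$ every $v_{t}$ is parallel to the all-ones vector of $\Rb^{\Xcal'}$ and every $w_{t}$ to the all-ones vector of $\Rb^{\Ycal'}$. The hypotheses $H(X)>0$ and $H(Y)>0$ mean $|\Xcal'|\ge 2$ and $|\Ycal'|\ge 2$, so there exist nonzero $v_{0}\in\Rb^{\Xcal'}$ and $w_{0}\in\Rb^{\Ycal'}$ with $\sum_{x}v_{0,x}=0$ and $\sum_{y}w_{0,y}=0$; these automatically satisfy $v_{0}^{\trans}v_{t}=0=w_{0}^{\trans}w_{t}$ for all $t\in\Tcal'$. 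For $\epsilon\in\Rb$ I would define a distribution of $(X,Y)$ by $Q_{\epsilon}(x,y)=P(x)P(y)+\epsilon\,v_{0,x}w_{0,y}$ on $\Xcal'\times\Ycal'$ and extend it by $Q_{\epsilon}(t,x,y)=P(t)\,Q_{\epsilon}(x,y)$.

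Then I would check the three properties exactly as in the proof of Theorem~\ref{thm:non-uniqueness-cardinalities}. First, $Q_{\epsilon}$ is a probability distribution for small $|\epsilon|$: its total mass is $1+\epsilon(\sum_{x}v_{0,x})(\sum_{y}w_{0,y})=1$, and since $P(x)P(y)>0$ throughout $\Xcal'\times\Ycal'$, non-negativity holds for $|\epsilon|$ small. Second, $Q_{\epsilon}\in\Delta_{P}$: summing over $y$, the correction contributes $\epsilon v_{0,x}\sum_{y}w_{0,y}=0$, giving $Q_{\epsilon}(t,x)=P(t)P(x)=P(t,x)$, and symmetrically $Q_{\epsilon}(t,y)=P(t,y)$. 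Third, $Q_{\epsilon}$ is again an optimizer, because $Q_{\epsilon}(t|x,y)=P(t)$ is still constant, so $T$ is independent of $(X,Y)$ under $Q_{\epsilon}$ and $I_{Q_{\epsilon}}(T:X|Y)=0$; equivalently, since $Q_{\epsilon}(T|X,Y)=Q_{0}(T|X,Y)$, Lemma~\ref{lem:H-concave} shows $H_{Q}(T|X,Y)$ is constant on the segment from $Q_{0}$ to $Q_{\epsilon}$. As the rank-one term $v_{0,x}w_{0,y}$ is nonzero, $Q_{\epsilon}\neq Q_{0}$ for $\epsilon\neq 0$, so the maximizer is not unique.

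The bookkeeping in the verification steps is routine; the one place where the hypotheses are genuinely used is the existence of the zero-sum vectors $v_{0},w_{0}$, which fails precisely when $H(X)=0$ or $H(Y)=0$. That is where the main obstacle would lie if either variable were degenerate, and conceptually it is the double independence collapsing $Q_{0}$ to the product distribution that supplies the needed rank-one freedom without any cardinality restriction on~$\Tcal$.
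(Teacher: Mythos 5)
Your proposal is correct and follows essentially the same route as the paper: both identify $Q_{0}=P_{T}P_{X}P_{Y}$ as a maximizer (since $T$ is independent of $(X,Y)$ under $Q_{0}$) and then perturb it inside $\Delta_{P}$ while keeping $Q(t|x,y)=P(t)$ constant, so that the perturbed distribution remains a maximizer. The paper's perturbation $\delta\,P_{T}(t)\,\gamma_{t;x_{0},x_{1};y_{0},y_{1}}$ is exactly your rank-one correction with the particular zero-sum choices $v_{0}=e_{x_{0}}-e_{x_{1}}$, $w_{0}=e_{y_{0}}-e_{y_{1}}$, whose existence is guaranteed by $H(X),H(Y)>0$.
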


\begin{proof}
Let \(Q_0 = Q_{0}(P) = P_TP_{X|T}P_{Y|T} =
P_TP_{X}P_{Y}\in\Delta_{P}\).  Then \(\ind[Q_0]{T}{(X,Y)}\) by construction.
Since $H_{Q}(T|X,Y) \le H(T)$ for $Q\in\Delta_{P}$ and since
$Q_{0}$ achieves equality, $Q_0$ maximizes
\(H(T|X,Y)\) on~$\Delta_{P}$.

Due to the assumption of positive entropy, there exist $x_{0},x_{1}\in\Xcal$, $y_{0},y_{1}\in\Ycal$ with $P_{X}(x_{0})>0$, $P_{X}(x_{1})>0$, $P_{Y}(y_{0}) > 0$ and $P_{Y}(y_{1}) > 0$.
For $\delta\in\Rb$ let
\begin{equation*}
  Q_{\delta}(t,x,y) := Q_{0}(t,x,y) + \delta p_{T}(t) \gamma_{t;x_{0},x_{1};y_{0},y_{1}}.
\end{equation*}
If $|\delta|$ is small enough, then $Q_{\delta}$ is non-negative and hence belongs to $\Delta_{P}$.  For such~$\delta$, the conditional $Q_{\delta}(x,y|t)$ does not depend on~$t$, whence $\ind[Q_{\delta}]{T}{(X,Y)}$.
Thus, all such $Q_{\delta}$ are maximizers of $H_{Q}(T|X,Y)$ for $Q\in\Delta_{P}$.
\end{proof}

\begin{example}
  \label{ex:binary-non-uniqueness}
  Let $P$ be the distribution of three independent uniform binary random variables $T,X,Y$, and let $P'$ be the joint distribution where $X,T$ are uniform independent binary random variables and where $X=Y$.  Then $\Delta_{P}=\Delta_{P'}$, and both $P$ and $P'$ maximize $H_{Q}(T|X,Y)$ for $Q\in\Delta_{P}$.

  This example is the same as Example~31 by~\citet{bertschinger2014quantifying}.  Ironically, \citet{bertschinger2014quantifying} remarked that the optimization problem is ill-conditioned, but they failed to observe the non-uniqueness of the optimum in this case.
\end{example}

The following technical result generalizes Theorem~\ref{thm:non-uniqueness}.  It is illustrated by Example~\ref{ex:blockwise-non-uniqueness}.
\begin{thm}
  \label{thm:non-uniqueness-v2}
  Suppose that \(\ind[P]{T}{X}[Y]\) and \(\ind[P]{T}{Y}[X]\).
  If there exist $x_{0}\in\Xcal$, $y_{0}\in\Ycal$ with $P(X=x_{0},Y=y_{0})>0$ and
  $H(X|Y=y_{0})\neq 0\neq H(Y|X=x_{0})$, then \(\max_{Q\in\Delta_P}{H_Q(T|X,Y)}\) is not unique.
\end{thm}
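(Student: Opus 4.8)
The plan is to reduce the statement to exhibiting two distinct elements of $\Delta_{P}$ that satisfy the conditional independence $\ind[Q]{T}{X}[Y]$. Writing $H_{Q}(T|X,Y) = H_{Q}(T|Y) - I_{Q}(T:X|Y)$ and noting that the $(T,Y)$-marginal, and hence $H_{Q}(T|Y)$, is constant on $\Delta_{P}$, the reformulation~\eqref{eq:optimization} shows that $\max_{Q\in\Delta_{P}}H_{Q}(T|X,Y) = H(T|Y)$ and that the maximizers are exactly those $Q\in\Delta_{P}$ with $I_{Q}(T:X|Y)=0$, i.e.\ with $\ind[Q]{T}{X}[Y]$. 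The hypothesis $\ind[P]{T}{X}[Y]$ guarantees that $P$ is one such maximizer, so it suffices to produce a second one. This mirrors Theorem~\ref{thm:non-uniqueness}, to which the argument reduces when the two conditional independences are strengthened to full independence.

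First I would use the entropy conditions to fix the combinatorial data of a perturbation. Since $H(X|Y=y_{0})\neq 0$ and $P(x_{0},y_{0})>0$, there is some $x_{1}\neq x_{0}$ with $P(x_{1},y_{0})>0$; symmetrically $H(Y|X=x_{0})\neq 0$ yields some $y_{1}\neq y_{0}$ with $P(x_{0},y_{1})>0$. Set $c_{t} := P(T=t|X=x_{0},Y=y_{0})$. Combining the two conditional independences gives $c_{t} = P(t|y_{0}) = P(t|x_{0})$, whence $P(t|x_{1},y_{0}) = P(t|y_{0}) = c_{t}$, $P(t|x_{0},y_{1}) = P(t|x_{0}) = c_{t}$, and also $P(t|y_{1}) = P(t|x_{0},y_{1}) = c_{t}$. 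This is the key algebraic point: all the conditionals $P(T|\cdot)$ attached to the three populated corners $(x_{0},y_{0}),(x_{1},y_{0}),(x_{0},y_{1})$, and indeed to the whole $y_{1}$-slice, coincide with the single vector $c=(c_{t})_{t}$.

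Next I would move along $\gamma := \sum_{t\in\Tcal} c_{t}\,\gamma_{t;x_{0},x_{1};y_{0},y_{1}}\in\ker(A)$, a combination of the generators~\eqref{eq:def-gamma} and hence automatically marginal-preserving, and set $Q_{\epsilon} := P + \epsilon\gamma$. The two corners whose mass decreases are $(x_{0},y_{1})$ and $(x_{1},y_{0})$, where $Q_{\epsilon}(t,\cdot)$ equals $c_{t}\big(P(x_{0},y_{1})-\epsilon\big)$ and $c_{t}\big(P(x_{1},y_{0})-\epsilon\big)$; since both $P$-masses are positive, $Q_{\epsilon}\in\Delta_{P}$ for all sufficiently small $\epsilon>0$, and $Q_{\epsilon}\neq P$ because $\gamma\neq 0$ (for instance $\gamma(t,x_{0},y_{0})=c_{t}$ with $\sum_{t}c_{t}=1$). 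It then remains to verify $\ind[Q_{\epsilon}]{T}{X}[Y]$, i.e.\ that $Q_{\epsilon}(t|x,y)$ is independent of $x$ for each fixed $y$. Away from $y\in\{y_{0},y_{1}\}$ nothing changes and this is inherited from $P$; along $y_{0}$ every populated $x$ already carried the conditional $c_{t}$; and along $y_{1}$ the entries at $(x_{0},y_{1})$, $(x_{1},y_{1})$, and every other populated $(x,y_{1})$ all carry the conditional $c_{t}=P(t|y_{1})$ under $Q_{\epsilon}$. Thus $Q_{\epsilon}$ is a second maximizer, establishing non-uniqueness.

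The step I expect to require the most care is the corner $(x_{1},y_{1})$: this pair need not lie in $\supp(P)$, so $P(T|x_{1},y_{1})$ may be undefined, yet the perturbation creates positive mass there. The choice of the weight $c_{t}$ is precisely what is needed, because it forces the newly created mass to carry the conditional $P(t|y_{1})=c_{t}$, keeping the $y_{1}$-slice of $Q_{\epsilon}$ conditionally independent of $X$. Verifying this, together with the routine non-negativity bookkeeping at the two decreasing corners, completes the proof.
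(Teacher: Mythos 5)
Your proof is correct, and the perturbation you construct is the same one the paper uses: moving from $P$ along $\sum_{t}c_{t}\gamma_{t;x_{0},x_{1};y_{0},y_{1}}$ with weights $c_{t}=P(t|x_{0},y_{0})$ (written in the paper, somewhat loosely, as $\delta\cdot P(T|X,Y)\gamma_{t;x_{0},x_{1};y_{0},y_{1}}$). Where you genuinely diverge is in how one certifies that the perturbed distribution is again a maximizer. The paper invokes the structure theorem of Fink (reformulated by Rauh and Ay) for distributions satisfying both $\ind{T}{X}[Y]$ and $\ind{T}{Y}[X]$: the support splits into blocks $\Xcal'_{i}\times\Ycal'_{i}$ on which $T$ is independent of the pair $(X,Y)$, the points $x_{1},y_{1}$ lie in the block of $(x_{0},y_{0})$, and the perturbation visibly preserves this block structure, hence both conditional independences. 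You avoid that external theorem entirely: you first observe that, since $H_{Q}(T|Y)$ is constant on $\Delta_{P}$ and $P$ witnesses $\min_{Q\in\Delta_{P}}I_{Q}(T:X|Y)=0$, the maximizers are exactly the $Q\in\Delta_{P}$ with $\ind[Q]{T}{X}[Y]$; you then derive the constancy $P(T|\cdot)=c$ at the relevant cells directly from the two hypotheses, and verify the single conditional independence for $Q_{\epsilon}$ slice by slice, including the delicate, possibly newly populated cell $(x_{1},y_{1})$, whose created mass carries exactly the conditional $c_{t}=P(t|y_{1})$. Your route is more elementary and self-contained, and it makes explicit that only one of the two independences needs to be re-checked after perturbing; what the paper's route buys is the stronger conclusion that $P_{\delta}$ again satisfies \emph{both} conditional independences, so the entire segment of maximizers stays inside the intersection model (this is the structural picture behind Example~\ref{ex:blockwise-non-uniqueness}), at the price of citing a nontrivial characterization.
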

\begin{proof}
  If \(\ind[P]{T}{X}[Y]\), then $I_{P}(T:X|Y)=0$.  From this it follows that
  $P$ belongs to $\arg\min_{Q\in\Delta_{P}}I_{Q}(T:X|Y)=\arg\max_{Q\in\Delta_{P}}H_{Q}(T|XY)$.  The probability distributions that satisfy
  \(\ind[P]{T}{X}[Y]\) and \(\ind[P]{T}{Y}[X]\) have first been characterized
  by~\citet{Fink11:Binomial_ideal_of_intersection_axiom}; see also the reformulation
  by~\citet{RauhAy14:Robustness_and_systems_design}.  This characterization implies that there
  are partitions $\Xcal=\Xcal'_{1}\cup\dots\cup\Xcal'_{b}$ and
  $\Ycal=\Ycal'_{1}\cup\dots\cup\Ycal'_{b}$ such that
  $\supp(P)\subseteq(\Xcal'_{1}\times\Ycal'_{1})\cup\dots\cup(\Xcal'_{b}\times\Ycal'_{b})$ and such that
  $\ind[P]T{\{X,Y\}}[X\in\Xcal'_{i},Y\in\Ycal'_{i}]$ for $i=1,\dots,b$.  There exists
  $i_{0}\in\{1,\dots,b\}$ such that $x_{0}\in\Xcal'_{i_{0}}$ and $y_{0}\in\Ycal'_{i_{0}}$.  Since
  $H(X|Y=y_{0})\neq 0\neq H(Y|X=x_{0})$, there exist $x_{1}\in\Xcal'_{i_{0}}\setminus\{x_{0}\}$ and
  $y_{1}\in\Ycal'_{i_{0}}\setminus\{y_{0}\}$ with $P(x_{1},y_{0})>0$ and $P(x_{0},y_{1})>0$.  For
  $\delta>0$ let
  \begin{equation*}
    P_{\delta} = P + \delta\cdot P(T|X,Y)\gamma_{t;x_{0},x_{1};y_{0};y_{1}}.
  \end{equation*}
  If $\delta$ is positive and small enough, then $P_{\delta}$ is a probability distribution in
  $\Delta_{P}$ that satisfies $\supp(P)=\supp(P_{\delta})$.  Moreover,
  $\ind[P_{\delta}]T{\{X,Y\}}[X\in\Xcal'_{i},Y\in\Ycal'_{i}]$ for $i=1,\dots,b$.  Hence,
  \(\ind[P_{\delta}]{T}{X}[Y]\) and \(\ind[P_{\delta}]{T}{Y}[X]\), and so
  $P_{\delta}\in\arg\min_{Q\in\Delta_{P}}I_{Q}(T:X|Y)$.
\end{proof}

\section{The case of binary \texorpdfstring{$T$}{T}}
\label{sec:binary-T}

\subsection{Independence properties for optimizers in the interior}
\label{sec:binary-T-independence}

If $\ind[P]{T}{X}[Y]$ or $\ind[P]{T}{Y}[X]$, then $P$ solves the PID optimization
problem~\eqref{eq:optimization}.  The next theorem is a partial converse in the case of binary~$T$.
We denote the interior of \(\Delta_{P} \) by \(\intDeltap\).

\begin{thm}
  \label{thm:cond_indep_binary_T}
  Let $T$ be binary. Assume that $\Delta_P$ has full support and that \(\tilde{Q} \in
  \intDeltap \cap \arg\max_{Q\in \Delta_{P}}{H_Q(T|X,Y)}\) is an interior point.
  Then, either \(\ind[\tilde{Q}]{T}{X}[Y]\) or \(\ind[\tilde{Q}]{T}{Y}[X]\) (or both).
  Thus, either $UI(T:X\setminus Y) = 0$ or $UI(T:Y\setminus X) = 0$.
\end{thm}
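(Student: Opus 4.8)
The plan is to convert the first-order optimality condition at the interior maximizer $\tilde Q$ into the rank-one condition of Remark~\ref{rem:sum-of-rk-one}, and then to exploit that a sum of two rank-one matrices is rarely again of rank one. Since $\tilde Q$ lies in the relative interior of $\Delta_{P}$ and $\Delta_{P}$ has full support, every probability $\tilde Q(t,x,y)$ is strictly positive; in particular $\tilde Q(x,y)>0$, so the conditional matrices $M_{t}:=\big(\tilde Q(t\mid x,y)\big)_{x,y}\in\Rb^{\Xcal\times\Ycal}$ are well-defined with strictly positive entries. If $|\Tcal'|=1$ then $T$ is constant on $\Delta_{P}$ and both conditional independences hold trivially, so I assume $\Tcal'=\{0,1\}$. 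Being interior, $\tilde Q$ has tangent space all of $\ker(A)$, which is spanned by the directions $\gamma_{t;x,x';y,y'}$, so optimality forces the directional derivative~\eqref{eq:pd} to vanish along each of them. By Remark~\ref{rem:sum-of-rk-one} this means each $M_{t}$ has rank one, say $M_{0}=v_{0}w_{0}^{\trans}$ and $M_{1}=v_{1}w_{1}^{\trans}$ with $v_{t},w_{t}$ strictly positive (possible since $M_{t}$ has positive entries); and since $\tilde Q(0\mid x,y)+\tilde Q(1\mid x,y)=1$ we obtain
\[
  v_{0}w_{0}^{\trans}+v_{1}w_{1}^{\trans}=\mathbf{1}\mathbf{1}^{\trans},
\]
a matrix of rank one.

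The crux is the elementary fact that a sum of two nonzero rank-one matrices factors as $ab^{\trans}+cd^{\trans}=\big[\,a\ \ c\,\big]\big[\,b\ \ d\,\big]^{\trans}$, which has rank two whenever both outer factors do; hence it has rank one only if $\{a,c\}$ or $\{b,d\}$ is linearly dependent. Applied to the identity above this gives two cases. If $v_{0},v_{1}$ are dependent then, both being positive, $v_{1}=\lambda v_{0}$ for some $\lambda>0$; substituting yields $v_{0}\big(w_{0}+\lambda w_{1}\big)^{\trans}=\mathbf{1}\mathbf{1}^{\trans}$, forcing $v_{0}\propto\mathbf{1}$, so that $\tilde Q(t\mid x,y)$ does not depend on $x$, i.e.\ $\ind[\tilde{Q}]{T}{X}[Y]$. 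The symmetric case forces $w_{0}\propto\mathbf{1}$ and $\ind[\tilde{Q}]{T}{Y}[X]$. By~\eqref{eq:optimization} the maximizer $\tilde Q$ also minimizes $I_{Q}(T:X|Y)$ over $\Delta_{P}$, so $\ind[\tilde{Q}]{T}{X}[Y]$ gives $UI(T:X\setminus Y)=I_{\tilde Q}(T:X|Y)=0$, while $\ind[\tilde{Q}]{T}{Y}[X]$ gives $0\le UI(T:Y\setminus X)\le I_{\tilde Q}(T:Y|X)=0$ since the latter is a minimum over the same domain $\Delta_{P}$.

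The main obstacle I anticipate is the bookkeeping around positivity and degeneracy. One must use the full-support and $\Tcal'=\{0,1\}$ assumptions to guarantee that neither $M_{0}$ nor $M_{1}$ vanishes and that the factors $v_{t},w_{t}$ may be taken strictly positive, for otherwise the two-rank-one-matrices dichotomy could fail to apply; and one must check that the collinearity conclusions $v_{0}\propto\mathbf{1}$ and $w_{0}\propto\mathbf{1}$ translate exactly into the stated conditional independences rather than into some weaker statement. The analytic input --- that an interior maximizer satisfies the full set of first-order conditions --- is, by contrast, routine once Remark~\ref{rem:sum-of-rk-one} is in hand.
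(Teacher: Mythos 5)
Your proof is correct, but it takes a genuinely different route through the middle of the argument than the paper does. Both proofs start the same way: at an interior maximizer with full support, all directional derivatives \eqref{eq:pd} vanish, which by Remark~\ref{rem:sum-of-rk-one} means each conditional matrix $M_t=(\tilde Q(t|x,y))_{x,y}$ has rank one, and binarity of $T$ enters through the normalization $M_0+M_1=\mathbf{1}\mathbf{1}^{\trans}$. From there the paper argues pointwise: fixing a base point $(x_0,y_0)$, the first-order equations at each $(x,y)$ combined with $\tilde Q(0|\cdot)+\tilde Q(1|\cdot)=1$ force a local dichotomy (case $I(x,y)$ or case $I\!I(x,y)$), and a separate combinatorial patching step (the sets $A_I$ and $A_{I\!I}$ are product sets covering $\Xcal'\times\Ycal'$, hence one of them is everything) is then needed to make that dichotomy uniform. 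You replace both steps with a single global linear-algebra fact: writing $M_0=v_0w_0^{\trans}$, $M_1=v_1w_1^{\trans}$ with positive factors, the identity $v_0w_0^{\trans}+v_1w_1^{\trans}=[\,v_0\ v_1\,][\,w_0\ w_1\,]^{\trans}$ shows the sum has rank two unless $\{v_0,v_1\}$ or $\{w_0,w_1\}$ is dependent, and either dependence collapses the sum to $v_0(w_0+\lambda w_1)^{\trans}=\mathbf{1}\mathbf{1}^{\trans}$, forcing $v_0\propto\mathbf{1}$ (resp.\ $w_0\propto\mathbf{1}$) and hence the conditional independence. Your route is shorter, avoids the patching argument entirely, and makes transparent exactly where binarity of $T$ is used: with three or more rank-one summands the dichotomy fails, which is precisely the content of Example~\ref{ex:ternary-no-CI}. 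What the paper's pointwise derivative computation buys in exchange is robustness to loss of positivity: it is the version of the argument that extends to restricted supports (Lemma~\ref{lem:binary-T-XtYt} and Theorem~\ref{thm:cond_indep_binary_T_supp}), where a factorization with strictly positive factors is unavailable. Your treatment of the side issues --- the degenerate case $|\Tcal'|=1$, positivity of the factors, and deducing $UI(T:X\setminus Y)=0$ or $UI(T:Y\setminus X)=0$ via \eqref{eq:optimization} from the fact that both minimizations run over the same domain $\Delta_P$ --- is sound.
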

\begin{remark}
  The proof of the theorem relies on the vanishing condition of the directional derivatives. Thus,
  the conclusion still holds when $\tilde Q$ does not belong to $\intDeltap$,
  as long as all directional derivatives of the target
  function $H_{Q}(T|X,Y)$ exist and vanish at~$\tilde Q$.
  By Remark~\ref{rem:sum-of-rk-one}, this happens if and only if for any $t\in\Tcal'$ the matrix
  $(\tilde Q(t|x,y))_{x,y}\in\Rb^{\Xcal\times\Ycal}$ has rank one.
\end{remark}
\begin{remark}
  When $\Tcal$ has cardinality three or more, the statement of the theorem becomes false; see Example~\ref{ex:ternary-no-CI}.  This is related to the fact that there exist three positive rank-one-matrices the sum of which has again rank one, cf.\ Remark~\ref{rem:sum-of-rk-one}.
  When the support of $\Delta_{P}$ is not full, the statement of the theorem becomes false, even when all variables are binary; see Example~\ref{ex:binary-singleton}
\end{remark}
\begin{remark}
  Theorem \ref{thm:cond_indep_binary_T} can be used to efficiently
  compute $UI$ (and the corresponding bivariate information
  decomposition) when the optimum lies in the interior of~\(\Delta_P \),
  as searching for conditional independences in
  \(\Delta_P \) constitutes solving a linear programming problem (see
  the proof of Theorem~\ref{thm:inner-uniqueness}). If no solution in the interior is found,
  \(\max_{Q\in \partial \Delta_P}(H_Q(T|X,Y))\) has to be solved.
\end{remark}

\begin{proof}

Under the assumption that the optimum is attained in the interior of
\(\Delta_{P}\), it is characterized by \(\frac{\partial
   H_q(T|X,Y)}{\partial_{g_{t,x,y}}} = 0\).
 This leads  to the system of equations
\begin{align*}
 \log \frac{\tilde{Q}(t|x,y_{0}) \tilde{Q}(t|x_{0},y)}{\tilde{Q}(t|x_{0},y_{0}) \tilde{Q}(t|x,y)} &= 0 \text{ ,}
\end{align*}
for $t\in\{0,1\}$, $x\in\Xcal\setminus\{x_{0}\}$ and $y\in\Ycal\setminus\{y_{0}\}$.
For fixed $x,y$, this rewrites to
\begin{align*}
\tilde{Q}(0|x,y_{0}) \tilde{Q}(0|x_{0},y) &= \tilde{Q}(0|x_{0},y_{0}) \tilde{Q}(0|x,y) \\ 
\tilde{Q}(1|x,y_{0}) \tilde{Q}(1|x_{0},y) &= \tilde{Q}(1|x_{0},y_{0}) \tilde{Q}(1|x,y) \text{ .}
\end{align*}
Using \(\tilde{Q}(0|x,y) = 1-\tilde{Q}(1|x,y) \), this system is equivalent  to 
\begin{align*}
\tilde{Q}(0|x,y_{0}) \tilde{Q}(0|x_{0},y) &= \tilde{Q}(0|x_{0},y_{0}) \tilde{Q}(0|x,y) \\ 
\tilde{Q}(0|x,y_{0}) +\tilde{Q}(0|x_{0},y) &= \tilde{Q}(0|x_{0},y_{0}) +\tilde{Q}(0|x,y) \text{ .}
\end{align*}
These equations imply
\begin{align*}
  (\tilde{Q}(0|x,y_{0}) &- \tilde{Q}(0|x_{0},y_{0}))(\tilde{Q}(0|x_{0},y) - \tilde{Q}(0|x_{0},y_{0}))
  \\
                        &= \tilde{Q}(0|x,y_{0})\tilde{Q}(0|x_{0},y) - \tilde{Q}(0|x,y_{0})\tilde{Q}(0|x_{0},y_{0})
  \\ & \qquad\qquad - \tilde{Q}(0|x_{0},y_{0}))\tilde{Q}(0|x_{0},y) + \tilde{Q}(0|x_{0},y_{0}))^{2}
  \\ & = \tilde{Q}(0|x_{0},y_{0}) \big(
  \tilde{Q}(0|x,y) - \tilde{Q}(0|x,y_{0}) - \tilde{Q}(0|x_{0},y) + \tilde{Q}(0|x_{0},y_{0})
  \big) = 0.
\end{align*}
Therefore, for fixed values of $x$ and~$y$, there are only two possible solutions:
\begin{align*}
I(x,y): \tilde{Q}(t|x_{0},y_{0}) &= \tilde{Q}(t|x,y_{0}) \text{ and } \tilde{Q}(t|x,y) = \tilde{Q}(t|x_{0},y) \text{ for all }t,\\
I\!I(x,y): \tilde{Q}(t|x_{0},y_{0}) &= \tilde{Q}(t|x_{0},y) \text{ and } \tilde{Q}(t|x,y) = \tilde{Q}(t|x,y_{0}) \text{ for all $t$.}
\end{align*}

Let $\Xcal'=\Xcal\setminus\{x_{0}\}$ and $\Ycal'=\Ycal\setminus\{y_{0}\}$.  By what has been shown so far,
$A_{I}\cup A_{I\!I}=\Xcal'\times\Ycal'$, where
\begin{align*}
A_{I} &= \big\{(x,y) \in\Xcal'\times\Ycal' : I(x,y)\text{ holds}\big\}, \\
A_{I\!I} &= \big\{(x,y) \in\Xcal'\times\Ycal' : I\!I(x,y)\text{ holds}\big\}.
\end{align*}
We next show that either $A_{I}=\Xcal'\times\Ycal'$ or $A_{I\!I}=\Xcal'\times\Ycal'$ (or both).

Suppose that $A_{I}$ is not empty.  Let $(x,y)\in A_{I}$, and let $y'\in\Ycal'\setminus\{y\}$.  If
$I\!I(x,y')$ holds, then
$\tilde Q(t|x,y') = \tilde Q(t|x,y_{0}) = \tilde Q(t|x_{0},y_{0}) = \tilde Q(t|x_{0},y')$.  Thus,
$I(x,y')$ also holds, which implies $(x,y')\in A_{I}$.  Thus, $A_{I}\subset\Xcal'\times\Ycal'$ is of
the form $A_{I}=\Xcal'_{I}\times\Ycal'$, where $\Xcal'_{I}\subseteq\Xcal'$.

Similarly, $A_{I\!I} = \Xcal'\times\Ycal'_{I\!I}$, where $\Ycal'_{I\!I}\subseteq\Ycal'$.  If
$A_{I}\neq\emptyset$ and $A_{I\!I}\neq\emptyset$, then $A_{I}\cap A_{I\!I}\neq\emptyset$; say
$(x',y')\in A_{I}\cap A_{I\!I}$.  Let $(x,y)\in A_{I}$.  Then
$\tilde Q(t|x,y) = \tilde Q(t|x',y) = \tilde Q(t|x',y')$ for all~$t$.  Similarly, if
$(x,y)\in A_{I\!I}$.  Then $\tilde Q(t|x,y) = \tilde Q(t|x,y') = \tilde Q(t|x',y')$ for all~$t$.
Thus, all conditional distributions of $t$ given any $(x,y)\in\Xcal\times\Ycal$ are identical, and
so $A_{I} = A_{I\!I}=\Xcal'\times\Ycal'$.

The theorem now follows from the following observation: if $A_{I} = \Xcal'\times\Ycal'$, then \(\ind[\tilde{Q}]{T}{X}[Y]\), and if $A_{I\!I}=\Xcal'\times\Ycal'$, then \(\ind[\tilde{Q}]{T}{Y}[X]\).
\end{proof}

As a corollary to Theorem~\ref{thm:non-uniqueness-cardinalities-CI}:
\begin{thm}\label{thm:inner-uniqueness}
  Let \(\tilde{Q} \in \intDeltap \cap \arg\max_{Q\in \Delta_P}H(T|X,Y) \), and assume that $\Delta_{P}$ has full support. Then  \(\tilde{Q}\) is not unique if
  \begin{enumerate}
    \item \(\ind[\tilde{Q}]{T}{X}[Y]\) and \(|\Xcal| \geq 3\) or
    \item \(\ind[\tilde{Q}]{T}{Y}[X]\) and \(|\Ycal| \geq 3\).
  \end{enumerate}
  Equivalently, \(\tilde{Q}\) is not unique
  \begin{itemize}
  \item when $UI(T:X\setminus Y) = 0$ and $|\Ycal|>2$, or
  \item when $UI(T:Y\setminus X) = 0$ and $|\Xcal|>2$.
  \end{itemize}
\end{thm}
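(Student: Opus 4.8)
The plan is to obtain this statement as a direct corollary of Theorem~\ref{thm:non-uniqueness-cardinalities-CI}, whose hypotheses are phrased through~$UI$ rather than through conditional independence. The only genuine content is to translate the conditional-independence hypotheses on the interior optimizer~$\tilde Q$ into vanishing-$UI$ hypotheses, to check that~$\tilde Q$ supplies the full-support optimizer required by that theorem, and to invoke the $X\leftrightarrow Y$ symmetry of the setup for the second case.

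First I would record the equivalence of the two formulations. Since $\tilde Q\in\arg\max_{Q\in\Delta_P}H_Q(T|X,Y)$, relation~\eqref{eq:optimization} shows that $\tilde Q$ also minimizes $I_Q(T:X|Y)$ over~$\Delta_P$, so the optimal value is $I_{\tilde Q}(T:X|Y)=UI(T:X\setminus Y)$. As conditional mutual information is non-negative and vanishes exactly under conditional independence, $\ind[\tilde{Q}]{T}{X}[Y]$ holds if and only if $UI(T:X\setminus Y)=0$; symmetrically, $\ind[\tilde{Q}]{T}{Y}[X]$ holds if and only if $UI(T:Y\setminus X)=0$. This matches the enumerated hypotheses with the bulleted ones.

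Next I would verify the hypotheses of Theorem~\ref{thm:non-uniqueness-cardinalities-CI}. Because $T$ is binary, $|\Tcal|=2$, so its cardinality condition $|\Tcal|<|\Ycal|$ is exactly $|\Ycal|>2$. That theorem further asks for an optimizer of full support: since $\Delta_P$ has full support and $\tilde Q\in\intDeltap$ is an interior point, $\tilde Q$ has maximal support in~$\Delta_P$ (cf.\ Lemma~\ref{lem:suppDeltaPt}), hence full support. Thus, when $UI(T:X\setminus Y)=0$ and $|\Ycal|>2$, Theorem~\ref{thm:non-uniqueness-cardinalities-CI} applies verbatim and yields non-uniqueness, which is the first case. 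For the second case I would observe that $\Delta_P$, the objective $H_Q(T|X,Y)$, and therefore the optimizer set are invariant under interchanging the roles of $X$ and~$Y$, so the $X\leftrightarrow Y$ mirror of Theorem~\ref{thm:non-uniqueness-cardinalities-CI} gives non-uniqueness whenever $UI(T:Y\setminus X)=0$ and $|\Xcal|>2$.

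There is no hard analytic step, since the real work already lives inside Theorem~\ref{thm:non-uniqueness-cardinalities-CI}; the one point needing care is the cardinality bookkeeping. The conditional independence $\ind[\tilde{Q}]{T}{X}[Y]$ collapses the conditionals $\tilde Q(t|x,y)$ to functions of~$y$ alone, which frees up the $X$-direction in the perturbation argument behind Theorem~\ref{thm:non-uniqueness-cardinalities-CI} (one only needs $|\Xcal|\ge 2$ there). The binding constraint is then on the \emph{conditioning} variable~$Y$, namely $|\Ycal|>2$, and dually on~$X$ in the second case. I would therefore make sure the cardinality requirement is attached to the conditioning variable rather than the conditioned one, as it is the conditioning variable that must be large enough to admit a nonzero marginal perturbation preserving both pair-marginals.
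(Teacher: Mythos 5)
Your proposal is correct and is exactly the paper's own route: the paper gives no proof beyond introducing the theorem with ``As a corollary to Theorem~\ref{thm:non-uniqueness-cardinalities-CI}'', and the details you supply---that \(\ind[\tilde{Q}]{T}{X}[Y]\) at an optimizer is equivalent to \(UI(T:X\setminus Y)=0\), that an interior point of a full-support \(\Delta_P\) has full support, that \(|\Tcal|<|\Ycal|\) reads \(|\Ycal|>2\) for binary \(T\), and the \(X\leftrightarrow Y\) symmetry---are precisely the routine steps the paper elides.

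Your closing remark about attaching the cardinality to the conditioning variable deserves emphasis, because it exposes an inconsistency in the statement itself: the enumerated and bulleted forms are \emph{not} equivalent as written. Item~1 pairs \(\ind[\tilde{Q}]{T}{X}[Y]\) with \(|\Xcal|\ge 3\), whereas the first bullet---and Theorem~\ref{thm:non-uniqueness-cardinalities-CI}, whose perturbation requires a nonzero vector of \(\Rb^{\Ycal}\) orthogonal to both conditionals \(Q^{*}(t|\cdot)\), \(t=0,1\)---requires \(|\Ycal|>2\). The enumerated form taken literally is false: let \(\Ycal\) be binary, \(\Xcal\) ternary, and \(P(t,x,y)=P(y)P(x|y)P(t|y)\) of full support with \(P(T|Y)\) nonconstant; then \(P\) is an interior optimizer satisfying \(\ind[P]{T}{X}[Y]\) and \(|\Xcal|=3\), yet by Lemma~\ref{lem:unique-conditional} any optimizer \(Q\) satisfies \(Q(t,x,y)=P(t|y)Q(x,y)\), and the constraint \(Q(T,X)=P(T,X)\) gives \(\sum_y P(t|y)\big(Q(x,y)-P(x,y)\big)=0\) for all \(t,x\), which forces \(Q=P\) because the \(2\times 2\) matrix \((P(t|y))_{t,y}\) is invertible. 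So what you prove (and all that Theorem~\ref{thm:non-uniqueness-cardinalities-CI} yields) is the bulleted version; the enumerated items should read \(|\Ycal|\ge 3\) in case~1 and \(|\Xcal|\ge 3\) in case~2.
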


\subsection{The case of restricted support}
\label{sec:binary-T-restricted-support}

With a little more effort, the analysis of Theorem~\ref{thm:cond_indep_binary_T} extends to the case
where $\Delta_{P}$ has restricted support.  For any $t\in\Tcal'=\{0,1\}$ let
$\Xcal_{t}=\supp(P(X|T=t))$ and $\Ycal_{t}=\supp(P(Y|T=t))$.  Lemma~\ref{lem:suppDeltaPt} says that
$\supp(\Delta_{P,t}) = \Xcal_{t}\times\Ycal_{t}$.

For any $t\in\Tcal$ let $\bar t=1-t$.
% Let $(t,\bar t)$ be a permutation of $\Tcal'$.
If $x\notin\Xcal_{t}$, then $P(T=\bar t|X=x) = 1$.
Therefore, $\ind TY[\{X=x\}]$ for all $x\in\Xcal\setminus\Xcal_{t}$.  Similarly, $\ind TX[\{Y=y\}]$
for all $y\in\Ycal\setminus\Ycal_{t}$.  Thus, to prove that $\ind TY[X]$, say, it suffices to look
at $\Xcal_{0}\cap\Xcal_{1}$.

\begin{lemma}
  \label{lem:binary-T-XtYt}
  \begin{enumerate}
  \item If $\Xcal_{0}\cap\Xcal_{1}=\emptyset$, then $\ind[Q]TY[X]$ for any~$Q\in\Delta_{P}$.
  \item If $\Ycal_{0}\cap\Ycal_{1}=\emptyset$, then $\ind[Q]TX[Y]$ for any~$Q\in\Delta_{P}$.
  \item Suppose that $\Xcal_{0}\cap\Xcal_{1}\neq\emptyset\neq\Ycal_{0}\cap\Ycal_{1}$.
    \begin{enumerate}
    \item If there exists $t\in\Tcal'$ that satisfies $\Xcal_{t}\setminus\Xcal_{\bar t}\neq\emptyset$ and
      $\Ycal_{t}\setminus\Ycal_{\bar t}\neq\emptyset$, then $\arg\max_{Q\in\Delta_{P}}H(T|X,Y)$ does not intersect the interior $\intDeltap$.
    \item If $\Xcal_{t}\setminus\Xcal_{\bar t}\neq\emptyset$ and if there exists
      $Q^{*}\in\intDeltap\cap\arg\max_{Q\in\Delta_{P}}H(T|X,Y)$, then
      $\ind[Q^{*}]TY[\{X,Y\in\Ycal_{t}\}]$ (i.e., with respect to $Q^{*}$, $T$ is independent of $Y$
      given $X$, given that $Y\in\Ycal_{t}$).
    \item If $\Ycal_{t}\setminus\Ycal_{\bar t}\neq\emptyset$ and if there exists
      $Q^{*}\in\intDeltap\cap\arg\max_{Q\in\Delta_{P}}H(T|X,Y)$, then
      $\ind[Q^{*}]TX[\{Y,X\in\Xcal_{t}\}]$.
    \end{enumerate}
  \end{enumerate}
\end{lemma}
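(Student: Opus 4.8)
The plan is to handle the three parts separately, with almost all the work concentrated in part~3, where I reuse the rank-one reformulation of the first-order conditions from Remark~\ref{rem:sum-of-rk-one}. The two preliminary facts I will use constantly are: first, since every $Q\in\Delta_{P}$ shares the $(X,T)$- and $(Y,T)$-marginals of $P$, the support sets $\Xcal_{t}=\supp(Q(X|T=t))$ and $\Ycal_{t}=\supp(Q(Y|T=t))$ do not depend on the choice of $Q\in\Delta_{P}$; second, as observed before the lemma, $x\notin\Xcal_{t}$ forces $Q(T=\bar t|X=x)=1$ whenever $Q(X=x)>0$, and symmetrically for $Y$. For part~1, if $\Xcal_{0}\cap\Xcal_{1}=\emptyset$ then every $x$ with $Q(X=x)>0$ lies in exactly one of $\Xcal_{0},\Xcal_{1}$, so $T$ is a deterministic function of $X$ under any $Q\in\Delta_{P}$; conditioning on $X$ already pins down $T$, whence $\ind[Q]{T}{Y}[X]$. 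Part~2 is the same argument with $X$ and $Y$ interchanged.

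For part~3 I fix an interior optimizer $Q^{*}\in\intDeltap$ (in~3(a) this is done for contradiction, assuming such a $Q^{*}$ exists). Being interior, $Q^{*}$ has full support $\supp(\Delta_{P})$, so for each $t$ every entry of the matrix $M_{t}:=(Q^{*}(t|x,y))_{(x,y)\in\Xcal_{t}\times\Ycal_{t}}$ is strictly positive, using $\supp(\Delta_{P,t})=\Xcal_{t}\times\Ycal_{t}$ from Lemma~\ref{lem:suppDeltaPt}. The directions $\gamma_{t;x,x';y,y'}$ with $x,x'\in\Xcal_{t}$ and $y,y'\in\Ycal_{t}$ keep $Q^{*}$ inside $\Delta_{P}$ in both orientations, so the corresponding derivatives~\eqref{eq:pd} must vanish; by Remark~\ref{rem:sum-of-rk-one} this says exactly that all $2\times2$ minors of the restricted matrix $M_{t}$ vanish, i.e.\ $M_{t}=v_{t}w_{t}^{\trans}$ with strictly positive vectors $v_{t}\in\Rb^{\Xcal_{t}}$, $w_{t}\in\Rb^{\Ycal_{t}}$. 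This positive rank-one factorization is the workhorse for all three sub-parts.

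The key mechanism is the following: if $x_{1}\in\Xcal_{t}\setminus\Xcal_{\bar t}$, then $Q^{*}(T=\bar t,X=x_{1})=0$, so the $x_{1}$-row of $M_{t}$ is identically~$1$; hence $w_{t}$ is constant on $\Ycal_{t}$, every column of $M_{t}$ is constant, and $Q^{*}(t|x,y)$ does not depend on $y\in\Ycal_{t}$. This is precisely the conditional independence asserted in~3(b) (for $x\notin\Xcal_{t}$ it is trivial, since there $Q^{*}(t|x,\cdot)\equiv0$); part~3(c) follows symmetrically from a $y_{1}\in\Ycal_{t}\setminus\Ycal_{\bar t}$ producing a constant-row structure. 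For~3(a) the hypothesis supplies both a distinguished $x_{1}$ and a distinguished $y_{1}$, so $M_{t}$ has an all-ones row and an all-ones column; the factorization $M_{t}=v_{t}w_{t}^{\trans}$ then forces both $v_{t}$ and $w_{t}$ to be constant, so $M_{t}\equiv1$. Choosing $(x,y)\in(\Xcal_{t}\cap\Xcal_{\bar t})\times(\Ycal_{t}\cap\Ycal_{\bar t})$, which is nonempty by assumption, gives $Q^{*}(\bar t|x,y)=0$, while $(\bar t,x,y)\in\supp(\Delta_{P})$ forces $Q^{*}(\bar t,x,y)>0$ at the interior point $Q^{*}$; this contradiction shows no interior optimizer can exist.

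The main obstacle I anticipate is the bookkeeping around supports: one must check that the relevant directional derivatives are genuinely two-sided (so that they vanish rather than merely obeying a one-sided sign condition) and that $M_{t}$ has strictly positive entries on $\Xcal_{t}\times\Ycal_{t}$, so that "all $2\times2$ minors vanish" really upgrades to a \emph{positive} rank-one factorization $v_{t}w_{t}^{\trans}$ rather than a degenerate one. Once that is in place, each sub-part is the short linear-algebra consequence of an all-ones row and/or column in a positive rank-one matrix.
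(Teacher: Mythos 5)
Your proposal is correct and takes essentially the same approach as the paper: parts 1--2 are identical ($T$ is a function of $X$), and part 3 rests on the same first-order conditions~\eqref{eq:pd} in two-sided feasible directions at an interior optimizer, combined with the observation that $Q^{*}(t|x_{1},\cdot)\equiv 1$ when $x_{1}\in\Xcal_{t}\setminus\Xcal_{\bar t}$ --- you merely package these conditions as a positive rank-one factorization $M_{t}=v_{t}w_{t}^{\trans}$ (Remark~\ref{rem:sum-of-rk-one}) and read off the conclusions, where the paper evaluates the individual directional derivatives directly. One cosmetic slip: once $w_{t}$ is constant, it is the \emph{rows} of $M_{t}$ that are constant vectors (the columns are all equal to one another), but your stated conclusion that $Q^{*}(t|x,y)$ does not depend on $y\in\Ycal_{t}$ is exactly right.
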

\begin{proof}
  Statements (1) and (2): If $\Xcal_{0}\cap\Xcal_{1}=\emptyset$, then $T$ is a function of $X$ for
  any~$Q\in\Delta_{P}$, whence $\ind[Q]TY[X]$. Statement~(2) follows similarly.
  
  Statement (3a): Let $x_{0}\in\Xcal_{0}\cap\Xcal_{1}$, $y_{0}\in\Ycal_{0}\cap\Ycal_{1}$,
  $x_{1}\in\Xcal_{t}\setminus\Xcal_{\bar t}\neq\emptyset$ and
  $y_{1}\in\Ycal_{t}\setminus\Ycal_{\bar t}\neq\emptyset$.  Suppose that $q\in\intDeltap$.
  Then $Q(t,x_{0},y_{0})>0$ and $Q(\bar t,x_{0},y_{0})>0$, whence $Q(t|x_{0},y_{0})\neq 1$.
  Then the derivative of $H(T|X,Y)$ in the direction of $\gamma_{t;x_{0},x_{1};y_{0},y_{1}}$ is
  \begin{equation*}
    \log\frac{Q(t|x_{0},y_{0})Q(t|x_{1},y_{1})}{Q(t|x_{0},y_{1})Q(t|x_{1},y_{0})}
    = \log Q(t|x_{0},y_{0})\neq 0.
  \end{equation*}

  Statement (3b): % By 3(a), $\Ycal_{t}\setminus\Ycal_{\bar t}=\emptyset$.
  If $|\Ycal_{t}|=1$, then $Y$ is constant when conditioning on~$Y\in\Ycal_{t}$, whence the
  conclusion holds trivially.  Let $y_{0},y_{1}\in\Ycal_{t}$ with $y_{0}\neq y_{1}$, let
  $x_{0}\in\Xcal_{0}\cap\Xcal_{1}$, and let $x_{1}\in\Xcal_{t}\setminus\Xcal_{\bar t}\neq\emptyset$.
  The derivative of $H(T|X,Y)$ at $Q^{*}$ in the direction of $\gamma_{t;x_{0},x_{1};y_{0},y_{1}}$
  is
  \begin{equation*}
    \log\frac{Q^*(t|x_{0},y_{0})Q^*(t|x_{1},y_{1})}{Q^*(t|x_{0},y_{1})Q^*(t|x_{1},y_{0})}
    = \log\frac{Q^*(t|x_{0},y_{0})}{Q^*(t|x_{0},y_{1})}.
  \end{equation*}
  By assumption, this derivative vanishes at $Q^{*}$, whence
  $Q^*(t|x_{0},y_{0})=Q^*(t|x_{0},y_{1})$, which proves the statement.
\end{proof}

\begin{thm}
  \label{thm:cond_indep_binary_T_supp}
  Let $T$ be binary, and suppose that $Q^{*}\in\arg\max_{Q\in\Delta_{P}}H(T|X,Y)$ lies in
  $\intDeltap$.
  \begin{itemize}
  \item If $\Xcal_{0}=\Xcal_{1}$ and $\Ycal_{0}\neq\Ycal_{1}$, then $\ind[Q^{*}]TX[Y]$.
  \item If $\Ycal_{0}=\Ycal_{1}$ and $\Xcal_{0}\neq\Xcal_{1}$, then $\ind[Q^{*}]TY[X]$.
  \end{itemize}
\end{thm}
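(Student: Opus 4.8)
The plan is to deduce the statement directly from Lemma~\ref{lem:binary-T-XtYt}, which has already isolated the relevant support combinatorics for binary $T$. The two bullet points are interchanged by swapping the roles of $X$ and $Y$ (the objective $H(T|X,Y)$ is symmetric in $X,Y$ and the lemma is stated symmetrically), so I would treat only the first one: assuming $\Xcal_{0}=\Xcal_{1}$ and $\Ycal_{0}\neq\Ycal_{1}$, I would derive $\ind[Q^{*}]TX[Y]$. First I would split according to whether $\Ycal_{0}\cap\Ycal_{1}$ is empty. If $\Ycal_{0}\cap\Ycal_{1}=\emptyset$, then Lemma~\ref{lem:binary-T-XtYt}(2) already gives $\ind[Q]TX[Y]$ for \emph{every} $Q\in\Delta_{P}$, so the conclusion holds for $Q^{*}$ without using its optimality or interiority at all.

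The substantive case is $\Ycal_{0}\cap\Ycal_{1}\neq\emptyset$, where I would check that the hypotheses of Lemma~\ref{lem:binary-T-XtYt}(3) are met: both $\Xcal_{0}$ and $\Xcal_{1}$ are non-empty because $\Tcal'=\{0,1\}$, so $\Xcal_{0}\cap\Xcal_{1}=\Xcal_{0}\neq\emptyset$, and $\Ycal_{0}\cap\Ycal_{1}\neq\emptyset$ by assumption. Since $\Ycal_{0}\neq\Ycal_{1}$, there is a $t$ with $\Ycal_{t}\setminus\Ycal_{\bar t}\neq\emptyset$; for this $t$, part (3c) applied to the interior optimizer $Q^{*}$ yields $\ind[Q^{*}]TX[\{Y,X\in\Xcal_{t}\}]$. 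Note that the hypothesis $\Xcal_{0}=\Xcal_{1}$ also rules out case (3a), which is consistent with $Q^{*}$ living in $\intDeltap$.

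The last step is to strip off the conditioning event $X\in\Xcal_{t}$, and this is exactly where the hypothesis $\Xcal_{0}=\Xcal_{1}$ is essential. Because $Q^{*}\in\Delta_{P}$, its support lies in $\supp(\Delta_{P})$, so the $X$-marginal of $Q^{*}$ is supported in $\Xcal_{0}\cup\Xcal_{1}=\Xcal_{0}=\Xcal_{1}=\Xcal_{t}$; hence $X\in\Xcal_{t}$ holds $Q^{*}$-almost surely and the restriction is vacuous, upgrading $\ind[Q^{*}]TX[\{Y,X\in\Xcal_{t}\}]$ to $\ind[Q^{*}]TX[Y]$. I expect this promotion to be the only place demanding care—every other step is a bookkeeping check of the lemma's hypotheses—and its crux is the observation that equal $X$-supports force the whole $X$-support into the single set $\Xcal_{t}$ on which (3c) supplies the independence.
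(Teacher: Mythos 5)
Your proposal is correct and takes essentially the same route as the paper: the paper's entire proof is the one-liner ``The theorem follows from Lemma~\ref{lem:binary-T-XtYt}'', and your argument supplies exactly the bookkeeping that sentence leaves implicit (part~(2) of the lemma when $\Ycal_{0}\cap\Ycal_{1}=\emptyset$, part~(3c) otherwise, plus the observation that $\Xcal_{0}=\Xcal_{1}$ makes the conditioning event $X\in\Xcal_{t}$ have $Q^{*}$-probability one, so the restricted independence upgrades to $\ind[Q^{*}]TX[Y]$). The symmetry reduction to the first bullet and the remark that $\Xcal_{0}=\Xcal_{1}$ rules out case~(3a) are likewise consistent with the paper's intent.
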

\begin{proof}
  The theorem follows from Lemma~\ref{lem:binary-T-XtYt}.
\end{proof}

\subsection{Statistics for uniqueness and support of optimizers for binary T}
\label{sec:binary-T-uniqueness-support}

To better understand whether the optimizer typically lies in the interior of \(\Delta_P\) and
whether it is typically unique, we uniformly sampled joint distributions
$P\in \Delta_{\Tcal,\Xcal,\Ycal}$ for binary $\Tcal$ and different cardinalities of \(|\Xcal|,|\Ycal|\).
% In this section, we report statistics how often the optimum is found in the interior of \(\Delta_P\) and if it is unique in dependence of the cardinalities \(|\Xcal|,|\Ycal|\), for uniformely sampled distributions $P\in \Delta_{\Tcal,\Xcal,\Ycal}$.
Uniform sampling from \(\Delta_{T,X,Y} \) was performed with Kraemers' method \citep{smith2004sampling}.
Based on 10000 samples, the following percentage of optima were found in the interior of \(\Delta_P\):
\begin{center}
\begin{tabular}{ccccc}
  \(|\Xcal|/|\Ycal|\) & 2 & 3 & 4 & 5 \\
  \midrule
  2 & 77.6 &49.3 & 76.3 & 81.4\\
  3 & - & 52.7 & 58.4 & 63.8\\
  4 & - & - & 57.0 &56.3 \\
  5 & - & - & - & 53.1
\end{tabular}
\end{center}

The percentage of solutions found in the interior of \(\Delta_P \) decreases with increasing cardinality of \(|\Xcal|\) and \(|\Ycal|\).  The following table lists the percentages for \(|\Xcal|=|\Ycal|=k\) over 1000 samples for different values of~$k$.
\begin{center}
  \begin{tabular}{ccccccccc}
    $k$:& 6&8&10&12&14&16&18&20 \\ \midrule
    optimizer in interior [\%]:& 47.8&43.9&41.0&37.4&37.3&37.5&32.5&29.1
  \end{tabular}
\end{center}

Under uniform sampling, all sampled distributions have full support. In accordance with Theorem~\ref{thm:inner-uniqueness}, we do not find unique optima in the interior of \(\Delta_P\), except when the cardinalities are \(2\times 2\times k\).  In the $2\times2\times k$-case, the percentage of samples where we found unique optimizers are (10,000 samples per $k$):
\begin{center}
  \begin{tabular}{rcccccc}
    $k$: &2&3&4&5&6&10\\ \midrule
    optimizer unique [\%]: & 100&31.2&7.4&2.4&0.1&0
\end{tabular}
\end{center}

\subsection{Visualization of the 2x2x3 case}

For the all binary case, the geometry of optimization domain $\Delta_P$ is generically a rectangle and can readily be visualized, see \citet{bertschinger2014quantifying}. In this section we aim to illustrate the features of the optimization domain for the next larger case $|\Tcal|=2,|\Xcal|=2,|\Ycal|=3$. In this case, the four-dimensional optimization domain \(\Delta_P = \Delta_{P,0} \times \Delta_{P,1}\) is the direct product of two two-dimensional polytopes. We parameterize elements \(Q\in \Delta_P \) by \(Q = Q_0 +
P_T(0)(g_{0,0,0}\gamma_{0;0,1,0,2}+g_{0,0,1}\gamma_{0;0,1,1,2})+P_T(1)(g_{1,0,0}\gamma_{1;0,1,0,2}+g_{1,0,1}\gamma_{1;0,1,1,2})\).
Figure~\ref{fig:vis-223}  visualizes \(\Delta_{P,0},\Delta_{P,1}\), and the projections of $\arg\max_{Q\in\Delta_{P}} H_{Q}(T|XY)$ for three different distributions sampled from the unit simplex. In (a) and (b), $\arg\max_{Q\in\Delta_{P}} H_{Q}(T|XY)$ is a singleton in the interior or on the boundary of \(\Delta_P\). Note that in case (b) both projections of the optimizer lie at the boundary of \(\Delta_{P,0},\Delta_{P,1}\), in agreement with Lemma~\ref{lem:maxi-boundary}
In (c), there exists no unique optimizer, but conditional independence \(\ind[Q^*]{T}{X}[Y]\) holds for all \(Q^*\) on the line segment between the boundary points in \(\Delta_{P,0} \) and \(\Delta_{P,1}\) and every such \(Q^*\in \arg\max_{Q\in\Delta_{P}} H_{Q}(T|XY)\).

\begin{figure*}
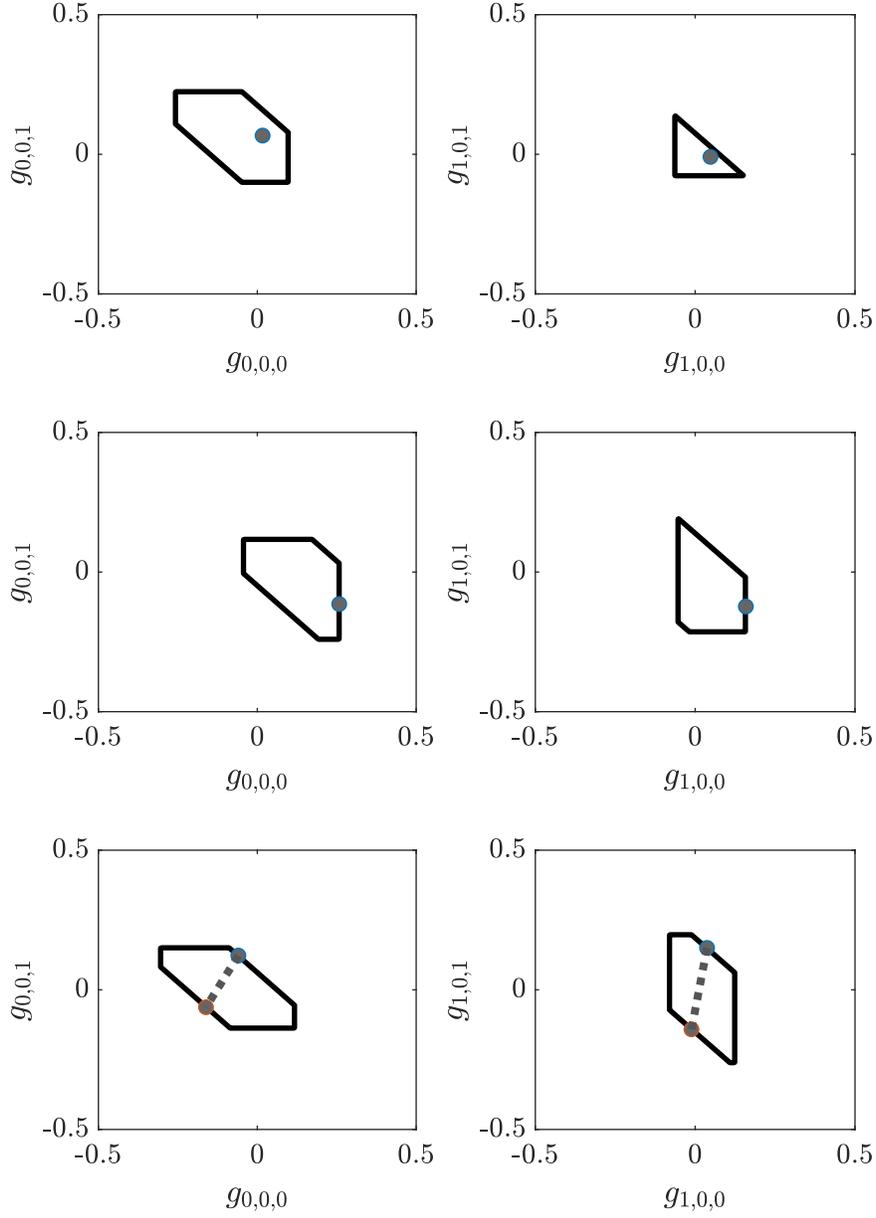

  \centering
    \begin{tabular}{c}
      (a) Unique optimizer in interior of $\Delta_P$\\
      \includegraphics{PID/fig1_unique_int}\\
      (b) Unique optimizer at boundary of $\Delta_P$\\ 
      \includegraphics{PID/fig1_unique_boundary}\\
      (c) \(\arg \max_{Q\in\Delta_{P}} H_{Q}(T|XY)\) given by a line segment\\
      \includegraphics{PID/fig1_non-unique}\\
    \end{tabular}
    \caption{\label{fig:vis-223}Three examples of \(\Delta_P\) and $\arg\max_{Q\in\Delta_{P}} H_{Q}(T|XY)$ for
      \(|\Tcal|=2\),\(|\Xcal|=2\),\(|\Ycal|=3\). Left plots show \(\Delta_{P,0} \), while \(\Delta_{P,1}\) is shown on the right
      side. Boundaries of \(\Delta_P\) are marked by black lines, $\arg\max_{Q\in\Delta_{P}} H_{Q}(T|XY)$ is marked in
      gray. (a) There exists a unique optimizer in the interior of \(\Delta_P\). (b) The unique optimizer lies at the boundary of
      \(\Delta_P\). Note that both projections of the optimizer lies at the boundary of \(\Delta_{P,0},\Delta_{P,1}\). (c)
      Gray lines mark the projections of $\arg\max_{Q\in\Delta_{P}} H_{Q}(T|XY)$ to \(\Delta_{P,0}\) and \(\Delta_{P,1}\).}
    \end{figure*}

\section{The all binary case}
\label{sec:binary-case}

If $X$, $Y$ and $T$ are all binary, \(\Delta_{\Tcal,\Xcal,\Ycal}\) has 7 dimensions, which split in
5 dimensions for \(V\) and 2 dimensions for \(\Delta_{P}\).
In this case it is possible to explicitly describe $\arg\max_{Q\in\Delta_{P}}H_{Q}(T|X,Y)$.  This description will be developped throughout this chapter and summarized at the end of this section in Theorem~\ref{thm:all-binary-cases}.

Throughought this section we assume that $\Tcal'=\{0,1\} = \Xcal = \Ycal$. 
In the following, \(V\) is parameterized by the variables
\begin{align}
  \label{eq:param-2-2-2}
  a = P_T(0), &&
  \begin{aligned}
    b = P_{X|T}(0|0), \\
    c = P_{X|T}(0|1),
  \end{aligned}
  &&
  \begin{aligned}
    d = P_{Y|T}(0|0),  \\ 
    e = P_{Y|T}(0|1),
  \end{aligned}
\end{align}
and by the coefficients \(g_1,g_2\) of \(a \gamma_{0;0,1;0,1},(1-a) \gamma_{1;0,1;0,1}\).
Table~\ref{tab:parameterizations}
makes the parametrization~\eqref{eq:decomposition} explicit.
\begin{table}% [htbp]
 \centering
\begin{tabular}{rrrl}
  \(T\) & \(X\) & \(Y\) & \(P(t,x,y)\)\\
  \midrule
0 & 0 & 0 & \(a(bd + g_1)\)\\
0 & 0 & 1 & \(a(b(1-d) - g_1)\)\\
0 & 1 & 0 & \(a((1-b)d - g_1)\)\\
0 & 1 & 1 & \(a((1-b)(1-d) + g_1)\)\\
1 & 0 & 0 & \((1-a)(ce + g_2)\)\\
1 & 0 & 1 & \((1-a)(c(1-e) - g_2)\)\\
1 & 1 & 0 & \((1-a)((1-c)e - g_2)\)\\
1 & 1 & 1 & \((1-a)((1-c)(1-e) + g_2)\)\\
\end{tabular}
 \caption{Parameterization of \(2\times 2 \times 2\) distributions \label{tab:parameterizations}}
\end{table}

\(\Delta_{P}\) is a rectangle.  The allowed parameter domain is
\begin{align*}
  - %a
  \min{\{bd,(1-b)(1-d)\}} &\leq g_1 \leq %a
                            \min{\{b(1-d),(1-b)d\}}\\
  - %(1-a)
  \min{\{ce, (1-c)(1-e)\}} &\leq g_2 \leq %(1-a)
                             \min{\{c(1-e),(1-c)e\}} \text{ .}
\end{align*}
The lower and upper bounds on \(g_i\) will be denoted by
\(g_{i_{\min}}\) and \(g_{i_{\max}}\) respectively.

The following holds:
\begin{enumerate}
\item $\Delta_{p,0}$ is a singleton iff $b\in\{0,1\}$ or $d \in\{0,1\}$.
\item $\Delta_{p,1}$ is a singleton iff $c\in\{0,1\}$ or $e \in\{0,1\}$.
\item $\Delta_{P}$ is a singleton iff both conditions are met.
  Thus, $\Delta_{P}$ degenerates to a single point precisely in the following four cases:
  \begin{enumerate}
  \item $H(X|T)=0$;
  \item $H(Y|T)=0$;
  \item $H(X|T=0)=0$ and $H(Y|T=1)=0$;
  \item $H(X|T=1)=0$ and $H(Y|T=0)=0$.
  \end{enumerate}
\end{enumerate}

In the all-binary case, Theorem~\ref{thm:cond_indep_binary_T} slightly generalizes:

\begin{thm}
  \label{thm:cond_indep_all_binary}
  Let $X,Y,T$ be binary.
  Suppose that ${\Delta}_P$ is not a singleton in case (c) or (d).
  If \(\tilde{Q} = \arg\max_{Q\in \Delta_{P}}{H_Q(T|X,Y)} \in \intDeltap\), then
  \(\ind[\tilde{Q}]{T}{X}[Y]\) or \(\ind[\tilde{Q}]{T}{Y}[X]\).
\end{thm}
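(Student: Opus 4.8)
The plan is to reduce the all-binary statement to the results already available for binary $T$, namely Theorem~\ref{thm:cond_indep_binary_T} (full support), Theorem~\ref{thm:cond_indep_binary_T_supp} (one ``collapsed'' coordinate) and Lemma~\ref{lem:binary-T-XtYt}, by a case distinction on the supports $\Xcal_0,\Xcal_1,\Ycal_0,\Ycal_1$. Recall that $\Delta_{P,t}$ degenerates to a point exactly when $\Xcal_t$ or $\Ycal_t$ is a singleton (Lemma~\ref{lem:Delta_Ps_singleton}); hence $\Delta_P=\Delta_{P,0}\times\Delta_{P,1}$ is two-dimensional, a segment, or a point according to how many factors degenerate. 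The existence of an \emph{interior} maximizer $\tilde Q\in\intDeltap$ is what rules out the geometrically degenerate configurations, while excluding cases~(c) and~(d) removes precisely the singletons at which neither conditional independence can hold.

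First I would dispose of the disjoint cases: if $\Xcal_0\cap\Xcal_1=\emptyset$, then Lemma~\ref{lem:binary-T-XtYt}(1) already gives $\ind[\tilde Q]{T}{Y}[X]$, and symmetrically Lemma~\ref{lem:binary-T-XtYt}(2) handles $\Ycal_0\cap\Ycal_1=\emptyset$ with $\ind[\tilde Q]{T}{X}[Y]$. So I may assume $\Xcal_0\cap\Xcal_1\neq\emptyset\neq\Ycal_0\cap\Ycal_1$. Here I would apply the contrapositive of Lemma~\ref{lem:binary-T-XtYt}(3a): because a maximizer lies in $\intDeltap$, no $t\in\Tcal'$ can satisfy both $\Xcal_t\setminus\Xcal_{\bar t}\neq\emptyset$ and $\Ycal_t\setminus\Ycal_{\bar t}\neq\emptyset$; equivalently, for each $t$ we have $\Xcal_t\subseteq\Xcal_{\bar t}$ or $\Ycal_t\subseteq\Ycal_{\bar t}$.

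I would then organize the remaining work by whether $\Xcal_0=\Xcal_1$ and whether $\Ycal_0=\Ycal_1$. If both equalities hold with $\Xcal_0=\Xcal_1=\Ycal_0=\Ycal_1=\{0,1\}$, this is the full-support case and Theorem~\ref{thm:cond_indep_binary_T} applies; if instead one of $\Xcal_0=\Xcal_1$ or $\Ycal_0=\Ycal_1$ is a singleton, the corresponding variable is constant and the matching conditional independence is immediate. If exactly one equality holds---say $\Ycal_0=\Ycal_1$ but $\Xcal_0\neq\Xcal_1$---then Theorem~\ref{thm:cond_indep_binary_T_supp} yields $\ind[\tilde Q]{T}{Y}[X]$, and symmetrically in the other case. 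The last possibility is $\Xcal_0\neq\Xcal_1$ and $\Ycal_0\neq\Ycal_1$; combined with the containments from the previous step and binarity, this forces one factor to collapse in $X$ and the other in $Y$ (e.g. $\Xcal_0=\{x^{*}\}\subsetneq\Xcal_1$ and $\Ycal_1=\{y^{*}\}\subsetneq\Ycal_0$), so both $\Delta_{P,0}$ and $\Delta_{P,1}$ are points, $\Delta_P$ is a singleton, and $H(X|T=0)=0=H(Y|T=1)$---i.e. exactly case~(c) (the mirror-image giving case~(d)), which is excluded by hypothesis.

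The main obstacle is this final step: verifying that the unique configuration surviving all the reductions, at which neither conditional independence can be forced, is exactly the type-(c)/(d) singleton, and translating between its support description ($\Xcal_0$ and $\Ycal_1$ singletons) and the entropy conditions defining~(c) and~(d). This also shows the hypothesis cannot be dropped: a direct computation for such a singleton (cf. Example~\ref{ex:binary-singleton}) confirms that there neither $\ind[\tilde Q]{T}{X}[Y]$ nor $\ind[\tilde Q]{T}{Y}[X]$ holds.
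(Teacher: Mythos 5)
Your proposal is correct and follows essentially the same route as the paper: the paper's (one-sentence) proof likewise disposes of the singleton cases (a) and (b) as trivial and reduces everything else to the binary-$T$ support results of Section~\ref{sec:binary-T-restricted-support} (Theorem~\ref{thm:cond_indep_binary_T_supp}, which rests on Lemma~\ref{lem:binary-T-XtYt}), with cases (c) and (d) excluded by hypothesis. Your write-up is in fact more complete than the paper's, since you explicitly invoke Theorem~\ref{thm:cond_indep_binary_T} for the full-support configuration, use Lemma~\ref{lem:binary-T-XtYt}(3a) to rule out the mixed configurations with no interior maximizer, and verify that the only surviving obstruction is exactly a type-(c)/(d) singleton---details the paper leaves implicit.
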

\begin{remark}
  Example~\ref{ex:binary-singleton} shows that the conclusion does not in general  hold in the singleton cases (c) and (d).
\end{remark}
\begin{proof}
  The singleton cases (a) and (b) are trivial, and the remaining cases follow from Theorem~\ref{thm:cond_indep_binary_T_supp}.
\end{proof}

In the all-binary case, uniqueness can be completely characterized:

\begin{thm}
  $\arg\max_{Q\in \Delta_{P}}{H_Q(T|X,Y)}$ is unique, unless $b=c$ and $d=e$.
\end{thm}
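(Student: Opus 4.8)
The plan is to prove the contrapositive of the uniqueness claim, together with its converse. For the substantive direction I show: if the maximizer is not unique, then $b=c$ and $d=e$. Suppose $Q\neq Q'$ are two maximizers, with coordinates $(g_1,g_2)$ and $(g_1',g_2')$ in the parameterization of Table~\ref{tab:parameterizations}, and set $\Delta_i=g_i'-g_i$. Abbreviate the two blocks of the table by $Q(0,x,y)=a\,u_{x,y}$ and $Q(1,x,y)=(1-a)v_{x,y}$, so that $u_{x,y}$ is affine in $g_1$ and $v_{x,y}$ affine in $g_2$. Since $\Tcal'=\{0,1\}$ forces $0<a<1$, the conditional is $Q(0|x,y)=a\,u_{x,y}/(a\,u_{x,y}+(1-a)v_{x,y})$. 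By Lemma~\ref{lem:unique-conditional} the two maximizers share this conditional, and clearing denominators turns $Q(0|x,y)=Q'(0|x,y)$ into the bilinear identity $u_{x,y}v'_{x,y}=u'_{x,y}v_{x,y}$.

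First I would argue that this identity holds at every cell $(x,y)$ with no support hypothesis. If $Q(x,y)>0$ and $Q'(x,y)>0$ it is just the cleared conditional equality. If $Q(x,y)=0$ then $u_{x,y}=v_{x,y}=0$, since both summands are non-negative and $a,1-a>0$; symmetrically $Q'(x,y)=0$ forces $u'_{x,y}=v'_{x,y}=0$. In either case both sides vanish, so the identity holds unconditionally. Because $u'_{x,y}-u_{x,y}=\pm\Delta_1$ and $v'_{x,y}-v_{x,y}=\pm\Delta_2$ carry the same sign in each cell, the products cancel and $u_{x,y}v'_{x,y}=u'_{x,y}v_{x,y}$ collapses to the linear relation $u_{x,y}\Delta_2=v_{x,y}\Delta_1$. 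Written out over the four cells this is
\begin{align*}
  (bd+g_1)\Delta_2 &= (ce+g_2)\Delta_1, &
  (b(1-d)-g_1)\Delta_2 &= (c(1-e)-g_2)\Delta_1,\\
  ((1-b)d-g_1)\Delta_2 &= ((1-c)e-g_2)\Delta_1, &
  ((1-b)(1-d)+g_1)\Delta_2 &= ((1-c)(1-e)+g_2)\Delta_1.
\end{align*}

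Next I would eliminate $g_1,g_2$ by taking linear combinations. Adding the two equations in the top row gives $b\Delta_2=c\Delta_1$, the bottom row gives $(1-b)\Delta_2=(1-c)\Delta_1$, and their sum is $\Delta_1=\Delta_2$; adding the two equations in the left column gives $d\Delta_2=e\Delta_1$. Since $Q\neq Q'$ forces $(\Delta_1,\Delta_2)\neq(0,0)$, we obtain $\Delta_1=\Delta_2\neq 0$, and cancelling this common value yields $b=c$ and $d=e$. Contrapositively, the maximizer is unique whenever $(b,d)\neq(c,e)$, which is the assertion.

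For the converse, $b=c$ and $d=e$ mean exactly $\ind[P]{T}{X}$ and $\ind[P]{T}{Y}$. On the diagonal $g_1=g_2$ one then has $u_{x,y}=v_{x,y}$, so $Q(0|x,y)\equiv a$, i.e. $\ind[Q]{T}{(X,Y)}$ and $H_Q(T|X,Y)=H(T)$ is globally maximal; whenever $\Delta_P$ is genuinely two-dimensional (equivalently $b=c\in(0,1)$, $d=e\in(0,1)$, which here is the same as full support) this diagonal meets $\Delta_P$ in a nondegenerate segment of maximizers, in accordance with Theorem~\ref{thm:non-uniqueness}. In the remaining sub-cases one of the common values lies in $\{0,1\}$, so by the singleton criteria above $\Delta_P$ collapses to a point and uniqueness is trivial. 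The main obstacle I anticipate is precisely this support bookkeeping: one must be sure the bilinear identities are legitimate at cells of zero probability and that the elimination never cancels a vanishing factor. The observation that $Q(x,y)=0$ forces $u_{x,y}=v_{x,y}=0$ removes every support hypothesis from the forward direction, leaving only the degenerate singletons to be dispatched separately.
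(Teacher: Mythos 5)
Your proof is correct, and it takes a genuinely different route from the paper's. The paper first reduces to maximizers in the interior of $\Delta_P$ (via Lemma~\ref{lem:maxi-boundary}), invokes the first-order case dichotomy $I$/$I\!I$ from the proof of Theorem~\ref{thm:cond_indep_binary_T}, solves the resulting linear systems \eqref{eq:case1} and \eqref{eq:case2} in $(g_1,g_2)$, and then must treat the degenerate case where $\Delta_P$ is a line separately (equation~\eqref{eq:case3}). You instead take two arbitrary maximizers, apply Lemma~\ref{lem:unique-conditional} to conclude that they share the conditional $Q(T|X,Y)$, convert this into the cell-wise bilinear identity $u_{x,y}v'_{x,y}=u'_{x,y}v_{x,y}$ --- correctly noting that the identity extends to cells of zero marginal because $0<a<1$ forces $u_{x,y}=v_{x,y}=0$ there, which is exactly the care needed since Lemma~\ref{lem:unique-conditional} only constrains conditionals where both are defined --- and then exploit the matching sign pattern of $g_1$ and $g_2$ in Table~\ref{tab:parameterizations} to collapse everything to the linear relations $u_{x,y}\Delta_2=v_{x,y}\Delta_1$, whose row and column sums give $b\Delta_2=c\Delta_1$, $(1-b)\Delta_2=(1-c)\Delta_1$, $d\Delta_2=e\Delta_1$, hence $\Delta_1=\Delta_2\neq 0$ and $b=c$, $d=e$. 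Your argument is shorter and uniform: it needs no interiority assumption, no first-order conditions, and no separate analysis of degenerate $\Delta_P$, since boundary and restricted-support cases are absorbed by the vanishing-cell observation. What the paper's longer computation buys, and what your argument does not deliver, is the explicit location of the optimizer: equations \eqref{eq:case1}, \eqref{eq:case2} and \eqref{eq:case3} are reused later (Theorem~\ref{thm:all-binary-cases} and the remark following it) to compute $\tilde Q$ in closed form, so the paper's route does double duty as a uniqueness proof and a solution formula. Your treatment of the converse (non-uniqueness along the diagonal when $b=c$, $d=e$ and $\Delta_P$ is not a singleton) is also sound and matches case (1) of Theorem~\ref{thm:all-binary-cases}.
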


\begin{proof}

\comment{\color{green}
The equations in case $I$ are equivalent to
\begin{equation*}
  \frac{\tilde Q(1 x y)}{\tilde Q(0 x y)}
  = \frac{\tilde Q(1 x' y)}{\tilde Q(0 x' y)}
  \qquad\Longleftrightarrow\qquad
  \tilde Q(1 x y) \tilde Q(0 x' y)
  = \tilde Q(1 x' y) \tilde Q(0 x y)
\end{equation*}
for all $x,x'\in\Xcal$, $y\in\Ycal$.
Insert $\tilde Q(txy) = p(t)\big(p(x|t)p(y|t) + g(txy)\big)$ and divide by $p_{T}(1)p_{T}(0)$:
\begin{multline*}
  p(x|1)p(x'|0)p(y|1)p(y|0)
  + g(1xy) p(x'|0)p(y|0)
  + g(0x'y) p(x|1)p(y|1)
  + g(1xy) g(0x'y)
  \\
  =
  p(x'|1)p(x|0)p(y|1)p(y|0)
  + g(1x'y) p(x|0)p(y|0)
  + g(0xy) p(x'|1)p(y|1)
  + g(1x'y) g(0xy)
\end{multline*}

We may suppose that $\ind[p]TX[Y]$, whence $p(txy) = p_{T}(t)p(y|t)p(x|y)$.  Insert
$\tilde Q(txy) = p_{T}(t)p(y|t)p(x|y) + p_{T}(t)g(txy)$ and divide by $p_{T}(1)p_{T}(0)$:
\begin{multline*}
  p(y|1)p(y|0)p(x|y)p(x'|y)
  + g(1xy) p(y|0)p(x'|y)
  + g(0x'y) p(y|1)p(x'|y)
  + g(1xy) g(0x'y)
  \\
  =
  p(y|1)p(y|0)p(x|y)p(x'|y)
  + g(1x'y) p(y|0)p(x|y)
  + g(0xy) p(y|1)p(x'|y)
  + g(1x'y) g(0xy),
\end{multline*}
which simplifies to
\begin{multline*}
  g(1xy) p(y|0)p(x'|y)
  + g(0x'y) p(y|1)p(x'|y)
  + g(1xy) g(0x'y)
  \\
  =
  g(1x'y) p(y|0)p(x|y)
  + g(0xy) p(y|1)p(x'|y)
  + g(1x'y) g(0xy).
\end{multline*}
}

If $\arg\max_{Q\in \Delta_{P}}{H_Q(T|X,Y)}$ is not unique, then $\arg\max_{Q\in\intDeltap}{H_Q(T|X,Y)}$ is not unique either (by Lemma~\ref{lem:maxi-boundary}), so we may restrict attention to maximizers in the interior of~$\Delta_{P}$.
Thus, we assume that $b,c,d,e\notin\{0,1\}$.

First assume that $\Delta_{P}$ has full support.
As shown in Theorem~\ref{thm:cond_indep_binary_T} and its proof, there are two cases $I$ and $I\!I$ to consider.
Inserting the parameterization from above 
and using the injectivity of \(\frac{1}{1+x}\) leads for case \(I\) to the
equations \footnote{No solutions exist for which one denominator equals 0. The same applies for case \(I\!I\).}
\begin{align*}
\frac{ce + g_2}{bd + g_1} &= \frac{(1-c)e - g_2}{(1-b)d - g_1} \\
\frac{c(1-e) - g_2}{b(1-d) - g_1} &= \frac{(1-c)(1-e) + g_2}{(1-b)(1-d) + g_{1}} \text{ ,}
\end{align*}
which simplify to 
\begin{align*}
g_2 d - g_1 e &= de(b - c) \\
g_1 (1-e) - g_2 (1-d) & =  (1-d)(1-e)(b - c) \text{ .}
\end{align*}
Rearranging for \(g_1,g_2\) leads to
\begin{equation} \label{eq:case1}
  \begin{aligned}
    g_1 (d - e) &= d(b - c)(1 - d) \\
    g_2 (d - e) &= e(b-c)(1-e) \text{ .}
  \end{aligned}
\end{equation}
For  \(d \neq e\), there exists a unique solution, and for \(b = c\), the
optimum is \(Q_0\) itself.
For $d=e$, there only exists a solution if $b=c$.

Similarly, case \(I\!I\)  reduces to 
\begin{align*}
g_2 b - g_1 c &= bc(d-e) \\
g_1 (1-c) - g_2 (1-b) &= (1-b)(1-c)(d-e)
\end{align*}
and rearranging for \(g_1,g_2\) gives
\begin{equation} \label{eq:case2}
  \begin{aligned}
    g_1 (b-c) &= b(d - e)(1 - b) \\
    g_2 (b-c) &= c(d - e)(1-c) \text{ .}
  \end{aligned}
\end{equation}
Again, there exists a unique solution for \(b \neq c\) and \(Q_0\)
is the optimum for \(d = e\).

Now assume that $\Delta_{P}$ is a line.  Following the proof of
Theorem~\ref{thm:cond_indep_all_binary}, assume that $b=0$.
Plugging the parametrization from above 
into the equality $Q(1|10) = Q(1|11)$
gives
\begin{equation*}
  \frac{(1 - a) \big((1 - c) e - g_{2}\big)}{(1 - a) \big((1 - c) e - g_{2}\big) + P(010)}
  =
  \frac{(1 - a) \big((1 - c) (1 - e) + g_2\big)}{(1 - a) \big((1 - c) (1 - e) + g_2\big) + P(011)}.
\end{equation*}
If $P(010) = 0$, then $P(011) = 0$, and conversely; otherwise, this equation has no solution.  In this case $P(010) = P(011) = 0$, the sum $P(01) = P(010) + P(011) = a$ vanishes, which contradicts $\Tcal'=\{0,1\}$.  Thus, $P(010) \neq 0$ and $P(011) \neq 0$.
Using injectivity of $x\mapsto \frac{1}{1 + x}$ and cancelling $(1-a)$, this is equivalent to
\begin{equation}
  \label{eq:case3}
  \frac{(1 - c) e - g_{2}}{P(010)}
  =
  \frac{(1 - c) (1 - e) + g_2}{P(011)}.  
\end{equation}
This equation is linear in $g_{2}$ and has a single unique solution, since the coefficient $\frac{1}{P(010)} + \frac{1}{P(011)}$ in front of $g_{2}$ is positive.
\end{proof}

Only the case where the maximizer lies on the boundary of
\(\Delta_{P}\) remains to be analyzed.

\begin{thm}
  Assume that  \(\tilde{Q} =
   \arg\max_{Q\in \Delta_{P}}{H_Q(T|X,Y)}\) lies at
the boundary of~\(\Delta_{P}\). Then, it is attained either at
\((g_{1_{\min}},g_{2_{\min}})\) or \((g_{1_{\max}},g_{2_{\max}})\).
\end{thm}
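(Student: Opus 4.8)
\emph{Plan.} Maximizing $H_Q(T|X,Y)$ over $\Delta_P$ is the same as minimizing $f(g_1,g_2):=I_Q(T:X|Y)$, since $H_Q(T|Y)$ is constant on $\Delta_P$. If $\Delta_P$ is a line segment (one of $\Delta_{P,0},\Delta_{P,1}$ a singleton), then $g_{1_{\min}}=g_{1_{\max}}$ or $g_{2_{\min}}=g_{2_{\max}}$, its two endpoints are exactly $(g_{1_{\min}},g_{2_{\min}})$ and $(g_{1_{\max}},g_{2_{\max}})$, and the claim is immediate; so I assume $\Delta_P$ is a genuine rectangle, i.e.\ $b,c,d,e\in(0,1)$. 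Differentiating $f$ in the $g_1$- and $g_2$-directions (that is, along $a\gamma_{0;0,1;0,1}$ and $(1-a)\gamma_{1;0,1;0,1}$) and using~\eqref{eq:pd} gives
\[
\partial_{g_1}f = a\log\frac{Q(0|0,0)\,Q(0|1,1)}{Q(0|0,1)\,Q(0|1,0)},\qquad
\partial_{g_2}f = (1-a)\log\frac{Q(1|0,0)\,Q(1|1,1)}{Q(1|0,1)\,Q(1|1,0)},
\]
with $a,1-a>0$. The strategy is to use the blow-up of these logarithms caused by vanishing entries of Table~\ref{tab:parameterizations} to first push the minimizer onto a vertex, and then exclude the two anti-diagonal vertices.

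\emph{Localizing the minimizer at a vertex.} On the relative interior of any edge of the rectangle, those joint probabilities of Table~\ref{tab:parameterizations} whose bound is active vanish, while the other entries of the relevant $T$-block stay positive. Crucially, the $(x,y)$-marginal containing a vanishing entry still receives a positive contribution from the other value of $T$ (e.g.\ on the interior of $\{g_1=g_{1_{\max}}\}$ the entry $Q(0,0,1)$ or $Q(0,1,0)$ vanishes, but $Q(0,1)=Q(1,0,1)>0$, resp.\ $Q(1,0)=Q(1,1,0)>0$, because $g_2$ is strictly inside its bounds). Hence a single conditional $Q(t|x,y)$ in the formulas above is a genuine $0$ with all marginals positive, so the outward normal derivative of $f$ equals $+\infty$. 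Therefore $f$ strictly decreases when moving into $\Delta_P$, and no interior-edge point can be a minimizer; the minimizer is a vertex.

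\emph{Excluding the anti-diagonal vertices.} At $(g_{1_{\max}},g_{2_{\min}})$ and $(g_{1_{\min}},g_{2_{\max}})$ the entries forced to vanish always lie in \emph{different} $(x,y)$-marginals: the active $g_1$-bound kills an entry of the $(0,1)$- or $(1,0)$-marginal, whereas the active $g_2$-bound kills an entry of the $(0,0)$- or $(1,1)$-marginal. Consequently no marginal is annihilated, so the argument of the previous paragraph still applies along the edge leaving the vertex: at $(g_{1_{\max}},g_{2_{\min}})$ one has $\partial_{g_1}f=+\infty$ and at $(g_{1_{\min}},g_{2_{\max}})$ one has $\partial_{g_1}f=-\infty$, so in each case $f$ strictly decreases along a feasible edge direction. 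Hence neither anti-diagonal vertex is a minimizer, and the only remaining candidates are $(g_{1_{\min}},g_{2_{\min}})$ and $(g_{1_{\max}},g_{2_{\max}})$, which proves the statement.

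\emph{Main obstacle.} The work is entirely in the bookkeeping of which entries of Table~\ref{tab:parameterizations} vanish, and in checking that in each relevant situation the corresponding marginal stays strictly positive, so that the degenerate conditional is a true $0$ and the derivative limit is genuinely $\pm\infty$ rather than an indeterminate $0/0$. This is precisely what distinguishes the diagonal vertices, where the two vanishing entries may share a marginal and cancel it (so the $\pm\infty$ argument fails and the vertex survives as a candidate), from the anti-diagonal vertices, where the two vanishing entries never share a marginal. One must also run the case split over which of the two competing terms realizes each $g_{i_{\min}}/g_{i_{\max}}$ (including the doubly-active case where both vanish), but the sign of the divergence is the same in every branch, so the conclusion is uniform.
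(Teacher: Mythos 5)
Your proposal is correct and takes essentially the same approach as the paper: the paper's own proof simply dispatches the degenerate case and then cites Lemma~\ref{lem:maxi-boundary}, whose proof is precisely the derivative-blow-up argument you re-derive (a vanishing conditional $Q(t|x,y)$ with positive marginal $Q(x,y)$ makes the inward directional derivative of $I_Q(T:X|Y)$ diverge to $-\infty$). Your edge-and-vertex bookkeeping is the inlined, coordinate-explicit form of applying that lemma, which in the $2\times2\times2$ rectangle forces the two active bounds to be of matching type.
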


\begin{proof}
  If $\Delta_{P}$ is degenerate, then either $g_{1_{\min}}=g_{1_{\max}}$ or $g_{2_{\min}}=g_{2_{\max}}$, and the theorem becomes trivial.
  Otherwise, the statement follows from Lemma~\ref{lem:maxi-boundary}.
\end{proof}

The following theorem sums up the different possibilities.
\begin{thm}
  \label{thm:all-binary-cases}
  For non-constant binary random variables $X$, $Y$ and $T$, there are five cases:
\begin{enumerate}
\item $b=c$ and $d=e$.  In this case, $\ind XY[T]$
  and $\arg\max_{Q\in \Delta_{P}}H_Q(T|X,Y)$ is not unique, but consists of the diagonal of
  $\Delta_{P}$.
\item $\ind[\tilde{Q}]{T}{X}[Y]$ for the unique $\tilde Q = \arg\max_{Q\in \Delta_{P}}H_Q(T|X,Y)$. % (which can be computed by solving~\eqref{eq:case1}).
\item \(\ind[\tilde{Q}]{T}{Y}[X]\) for the unique $\tilde Q = \arg\max_{Q\in \Delta_{P}}H_Q(T|X,Y)$. % (which can be computed by solving~\eqref{eq:case2}).
\item The unique maximizer lies at $(g_{1_{\min}},g_{2_{\min}})$.
\item The unique maximizer lies at $(g_{1_{\max}},g_{2_{\max}})$.
\end{enumerate}
\end{thm}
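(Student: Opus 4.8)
The plan is to read this final statement as a bookkeeping corollary that assembles the three preceding theorems of the all-binary section — the uniqueness characterization, Theorem~\ref{thm:cond_indep_all_binary}, and the boundary theorem — around the single dichotomy \emph{unique versus non-unique maximizer}. First I would record that, since $X,Y,T$ are non-constant, $H(X),H(Y),H(T)>0$; in particular $b=c$ forces $b=c\in(0,1)$ (otherwise $X$ would be constant) and likewise $d=e\in(0,1)$, so the condition ``$b=c$ and $d=e$'' always yields a non-degenerate (square) domain $\Delta_{P}$. This already separates case~(1) cleanly from the degenerate possibilities recorded earlier in the section.

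For case~(1) I would argue the equivalence ``non-unique $\iff$ $b=c$ and $d=e$'' in both directions: the preceding uniqueness theorem gives that the maximizer is unique unless $b=c$ and $d=e$, while the converse (that this condition forces non-uniqueness) is exactly Theorem~\ref{thm:non-uniqueness}, whose hypotheses $\ind[P]{T}{X}$ and $\ind[P]{T}{Y}$ translate into $b=c$ and $d=e$. To pin down the maximizer \emph{set}, I would compute $Q_{0}=Q_{0}(P)=P_{T}P_{X}P_{Y}$ (fully factorized, since $b=c$ makes $X$ independent of $T$ and $d=e$ makes $Y$ independent of $T$), note that $H_{Q_{0}}(T|X,Y)=H(T)$ is the global upper bound for $H_{Q}(T|X,Y)$, and then describe the equality set. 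Using the parameterization of Table~\ref{tab:parameterizations}, the conditionals satisfy $Q(X,Y|T=0)=Q(X,Y|T=1)$ if and only if $g_{1}=g_{2}$; this set is exactly $\{Q\in\Delta_{P}: T\text{ is independent of }(X,Y)\}$, i.e.\ the main diagonal of the square, and on it $H_{Q}(T|X,Y)=H(T)$ is maximal while off it $H_{Q}(T|X,Y)<H(T)$ by Lemma~\ref{lem:H-concave}. So the argmax is precisely the diagonal. I would also confirm the independence assertion of case~(1) by direct substitution: the relation that holds along the diagonal is the joint independence of $T$ and the pair $(X,Y)$ (equivalently $I_{Q}(T:X,Y)=0$), which is the correct reading of $\ind XY[T]$ in this context.

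For cases~(2)--(5) the maximizer $\tilde{Q}$ is unique, and it only remains to locate it. If $\Delta_{P}$ is a singleton, then $g_{1_{\min}}=g_{1_{\max}}$ and $g_{2_{\min}}=g_{2_{\max}}$, so $\tilde{Q}$ coincides with both corners $(g_{1_{\min}},g_{2_{\min}})$ and $(g_{1_{\max}},g_{2_{\max}})$ and falls trivially under cases~(4)/(5); this is precisely where the singleton subcases (c),(d) excluded from Theorem~\ref{thm:cond_indep_all_binary} are routed, so that the possible failure of conditional independence there (Example~\ref{ex:binary-singleton}) does no harm. Otherwise $\Delta_{P}$ is a line or a rectangle, and $\tilde{Q}$ is either interior or on the relative boundary. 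If $\tilde{Q}\in\intDeltap$, then Theorem~\ref{thm:cond_indep_all_binary} applies (its non-singleton hypothesis holds) and yields \(\ind[\tilde{Q}]{T}{X}[Y]\) or \(\ind[\tilde{Q}]{T}{Y}[X]\), giving case~(2) or~(3); if $\tilde{Q}$ lies on the boundary, the preceding boundary theorem places it at $(g_{1_{\min}},g_{2_{\min}})$ or $(g_{1_{\max}},g_{2_{\max}})$, giving case~(4) or~(5). Since the relative interior and relative boundary exhaust $\Delta_{P}$, the five cases are exhaustive.

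The routine parts are the citations; the main obstacle is the self-contained analysis of case~(1) — establishing that the maximizer set is exactly the diagonal $\{g_{1}=g_{2}\}$ and stating the relevant independence correctly — together with the bookkeeping that sends a degenerate (singleton) $\Delta_{P}$ to cases~(4)/(5) rather than to the interior cases, so as to sidestep the singleton exceptions of Theorem~\ref{thm:cond_indep_all_binary}. I would emphasize that the five cases are meant to be exhaustive but need not be mutually exclusive: a unique interior $\tilde{Q}$ may satisfy both conditional independences simultaneously, and a singleton sits in both~(4) and~(5).
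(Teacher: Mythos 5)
Your proposal is correct and does essentially what the paper intends: the paper states Theorem~\ref{thm:all-binary-cases} without a separate proof, as a summary assembled from the three preceding results of the section (the uniqueness theorem, Theorem~\ref{thm:cond_indep_all_binary}, and the boundary theorem), and your dichotomy ``unique vs.\ non-unique,'' followed by ``interior vs.\ boundary'' with singletons routed to cases (4)/(5), is exactly that assembly. The one place where you supply an argument the paper leaves implicit is case~(1): you identify the argmax with the diagonal $\{g_{1}=g_{2}\}$ by noting that $H_{Q}(T|X,Y)\le H(T)$ with equality precisely when $T$ is independent of the pair $(X,Y)$, which in the parameterization of Table~\ref{tab:parameterizations} (with $b=c$, $d=e$) happens exactly when $g_{1}=g_{2}$; the paper would instead obtain the same set by specializing its stationarity equations \eqref{eq:case1} and \eqref{eq:case2}, which degenerate to $g_{1}=g_{2}$ when $b=c$ and $d=e$. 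Your reading of the independence claim in case~(1) as joint independence of $T$ from $(X,Y)$ along the diagonal (rather than the literal rendering $X\Perp Y\mid T$, which holds only at $Q_{0}$) is the sensible interpretation, and your remarks on non-exclusivity of the cases match the paper's own Remark following the theorem.
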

\begin{remark}
  (1) The last four cases in Theorem~\ref{thm:all-binary-cases} intersect.  For example, the intersection of the last four cases contains the
distribution $\frac12 \delta_{000} + \frac12 \delta_{111}$ (see~\citep{Fink11:Binomial_ideal_of_intersection_axiom,RauhAy14:Robustness_and_systems_design} for a discussion of the intersection of cases (2) and (3)).

(2) In cases 2.\ and 3., if $0<b,c,d,e <1$, then $\tilde Q$ can be computed by
solving~\eqref{eq:case1} or~\eqref{eq:case2}.  If $b=0$, then $\tilde Q$ can be computed in cases 2.\ and 3.\ by solving~\eqref{eq:case3}.  Similar equations can be obtained if $b=1$ or if any of $c,d,e$ lies in $\{0,1\}$.

(3) The five cases can be distinguished by polynomial inequalities among the parameters
$a,b,c,d,e$.  Therefore, the five cases correspond to five semi-algebraic sets of probability
distributions.
For example, case (2) holds if and only if the unique solution
$(g_{1},g_{2})$ to~\eqref{eq:case1} satisfies
$g_{i_{\min}}\le g_{i}\le g_{i_{\max}}$ for $i=1,2$, which can be formulated as eight
polynomial inequalities.
\end{remark}

% These results make it possible to exactly solve

% \(\arg\max_{Q \in \Delta_{P}}H_Q(T|X,Y)\) by
% checking whether the solutions of \eqref{eq:case1}, \eqref{eq:case2} or~\eqref{eq:case3}
% lie in \(\Delta_{P}\) and otherwise using the maximum of
% \(H(T|X,Y)\) at
% \((g_{1_{\min}},g_{2_{\min}})\) and
% \((g_{1_{\max}},g_{2_{\max}})\).

\section{Examples}
\label{sec:examples}

\begin{example}[For ternary $T$, maximizers with full support need not satisfy CI statements]
  \label{ex:ternary-no-CI}
Let $X,Y$ be binary random variables with $P(X, Y)$ arbitrary (of full support), and let $T$ be ternary with
\begin{align*}
  (P(T=1|X=x,Y=y))_{x,y} &=
  \begin{pmatrix}
    \frac13 & \frac12 \\
    \frac1{12} & \frac18
  \end{pmatrix},
  \\
  (P(T=2|X=x,Y=y))_{x,y} &= 
  \begin{pmatrix}
    \frac13 & \frac18 \\
    \frac5{24} & \frac5{64}
  \end{pmatrix},
  \\
  (P(T=3|X=x,Y=y))_{x,y} &=
  \begin{pmatrix}
    \frac13 & \frac38 \\
    \frac{17}{24} & \frac{51}{64}
  \end{pmatrix}
\end{align*}

Then $P$ minimizes $I_{Q}(T:X|Y)$ on $\Delta_{P}$ (cf. Remark~\ref{rem:sum-of-rk-one}), and one can
check that $P$ is the unique minimizer on~$\Delta_{P}$ (it is impossible to find a line through $P$ in
$\Delta_{P}$ such that the two points at which this line hits the boundary satisfy the conclusion of
Lemma~\ref{lem:maxi-boundary}). $P$ has full support, but there is no conditional independence
statement.
\end{example}

\begin{example}[An illustration of Theorem~\ref{thm:non-uniqueness-v2}]
  \label{ex:blockwise-non-uniqueness}
  Consider the distributions
  \begin{center}
    \begin{tabular}{cccc}
      $x$ & $y$ & $t$ & $P(x,y,t)$ \\
      \midrule
      0 & 0 & 0 & $\tfrac16$ \\
      0 & 0 & 1 & $\tfrac16$ \\
      1 & 1 & 0 & $\tfrac16$ \\
      1 & 1 & 1 & $\tfrac16$ \\
      2 & 2 & 0 & $\tfrac19$ \\
      2 & 2 & 1 & $\tfrac29$
    \end{tabular}
    \hfil
    \begin{tabular}{cccc}
      $x$ & $y$ & $t$ & $P'(x,y,t)$ \\
      \midrule
      0 & 1 & 0 & $\tfrac16$ \\
      0 & 1 & 1 & $\tfrac16$ \\
      1 & 0 & 0 & $\tfrac16$ \\
      1 & 0 & 1 & $\tfrac16$ \\
      2 & 2 & 0 & $\tfrac19$ \\
      2 & 2 & 1 & $\tfrac29$
    \end{tabular}
  \end{center}
  Then $\ind[P]{T}{Y}[X]$ and $\ind[P']{T}{Y}[X]$ as well as $\ind[P]{T}{X}[Y]$ and $\ind[P']{T}{X}[Y]$, and $P'\in\Delta_{P}$.  It follows that
  $I_{P}(T:Y|X) = I_{P'}(T:Y|X) = 0$, whence $P$ and $P'$ are both minimizers.  The same holds true for any convex
  combination of $P$ and~$P'$.  Note that $P$ and $P'$ (more generally: any convex combination of $P$ and $P'$) have
  restricted support: the probability of $\big\{X=2, Y\neq 2\big\}$ vanishes.
  On the other hand, $\supp(\Delta_{P})$ is full.
\end{example}

\begin{example}[The all-binary case where $\Delta_{P}$ is a line]
  \label{ex:suppDp-not-full}
  Consider the $2\times2\times2$ distribution given by $e=0$ and $a,b,c,d = \tfrac12$
  \begin{center}
\begin{tabular}{cccc}
  \(T\) & \(X\) & \(Y\) & \(P(t,x,y)\)\\
  \midrule
0 & 0 & 0 & $\tfrac18$ \\
0 & 0 & 1 & $\tfrac18$\\
0 & 1 & 0 & $\tfrac18$\\
0 & 1 & 1 & $\tfrac18$\\
1 & 0 & 1 & $\tfrac14$\\
1 & 1 & 1 & $\tfrac14$
\end{tabular}
  \end{center}
  $\Delta_{P}$ degenerates to a line $P + g_{1}\gamma_{0;0,1;0,1}$ with support $-\tfrac18\leq g_1 \leq \tfrac18$.
  The conditional entropy is
  \begin{align*}
    H_{g_{1}}(T|X,Y)
    & = (\tfrac38 - g_{1}) H_{g_{1}}(T|0,1) + (\tfrac38 + g_{1}) H_{g_{1}}(T|1,1) \\
    & = (\tfrac38 - g_{1}) h\bigg(\frac{\tfrac18 - g_{1}}{\tfrac38 - g_{1}},\frac{\tfrac14}{\tfrac38 - g_{1}}\bigg) + (\tfrac38 + g_{1}) h\bigg(\frac{\tfrac18 + g_{1}}{\tfrac38 + g_{1}},\frac{\tfrac14}{\tfrac38 + g_{1}}\bigg).
  \end{align*}
  By symmetry and Lemma~\ref{lem:unique-conditional}, the unique maximizer of $H_{g_{1}}(T|X,Y)$ lies at $g_{1}=0$, that is, $P$
  is the unique solution to the optimization problem.  In this case, $P$ equals~$Q_{0}$; that is, $\ind[P]XY[T]$ holds.  Moreover, $\ind[P]TX[Y]$ holds.
\end{example}

\begin{example}[The all-binary case where $\Delta_{P}$ is a singleton]
  \label{ex:binary-singleton}
  Consider the $2\times2\times2$ distribution given by $b=e=1$ and $a,c,d = \tfrac12$:
  \begin{center}
    \begin{tabular}{cccc}
      \(T\) & \(X\) & \(Y\) & \(P(t,x,y)\)\\
      \midrule
      0 & 0 & 0 & $\tfrac14$ \\
      0 & 0 & 1 & $\tfrac14$ \\
      1 & 0 & 0 & $\tfrac14$ \\
      1 & 1 & 0 & $\tfrac14$
    \end{tabular}
  \end{center}
  Here, $\Delta_{P}$ is a singleton.
  Neither $\ind[P]TY[X]$ nor $\ind[P]TX[Y]$ holds.
\end{example}

\section{Conclusions}
\label{sec:conclusions}

In this work we investigated uniqueness and support of the solutions
to the optimization problem underlying the definition of the unique
information function $UI(T:X\setminus Y)$ defined by
\citet{bertschinger2014quantifying}. This optimization problem
consists of maximizing the conditional entropy \(H(T|XY)\) over the
space of probability distributions with fixed pairwise \(TX,TY\)
marginals. We showed that this conditional entropy is not strictly
concave in exactly the directions in which \(P(T|XY)\) is constant.
From this we showed that all optima that are attained in the interior
of the optimization space which have full support are not unique if
\(|\Tcal|<\max(|\Xcal|,|\Ycal|)\) and identified sufficient conditions
for non-uniqueness that relate to independence statements. If the
variable \(\Tcal\) is binary we showed partial converses of these
results. In this case, vanishing of the directional derivatives of the
\(H(T|XY)\) implies a conditional independence $\ind TY[X]$ or
$\ind TX[Y]$ and thus vanishing of the corresponding unique
informations. Imposing such an independence relation on the
optimization domain led to a set of linear constraints. Thus, by
solving this linear problems we solve the optimization problem if
there exists a solution in the interior, otherwise we reduce the
optimization domain to its boundary. Numerical experiments showed that
a noticeable fraction of distributions sampled uniformely from the
probability simplex have corresponding optima in the interior. This
fraction becomes smaller with growing cardinalities of
\(|\Xcal|,|\Ycal|\). We derived an analytical solution of the
optimization problem when all variables are binary. Whenever possible,
we gave extensions to the theorems relaxing the assumptions on the
support of the optima and gave examples showing that the assumptions
in our theorems are neccesary.

\section*{Authors' Contributions}

Work on this project was initiated by questions of JJ.  Initial results for the all binary case were obtained by MS and JJ.  MS and JR worked together to generalize and complete the results.  MS and JR wrote the paper.  All authors read and approved the final manuscript.

\section*{Acknowledgement}

\small
Maik Sch\"unemann received support from the SMARTSTART program and the DFG priority proram SPP 1665 (ER 324/3-1).
We thank Pradeep Kr.\ Banerjee, Eckehard Olbrich and Udo Ernst for helpful remarks.

\ifarxiv\else
\makesubmdate
\fi

\bibliographystyle{plainnat}
\bibliography{UIproperties}

\ifarxiv\else
\makecontacts
\fi

\end{document}

% tar -czf UIproperties.tar.gz UIproperties.tex UIproperties.bbl 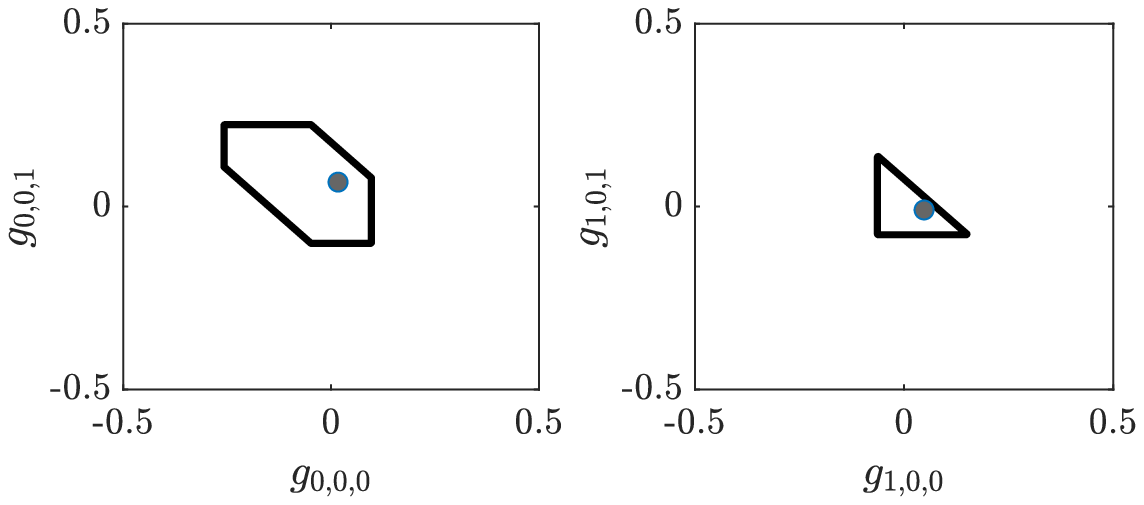 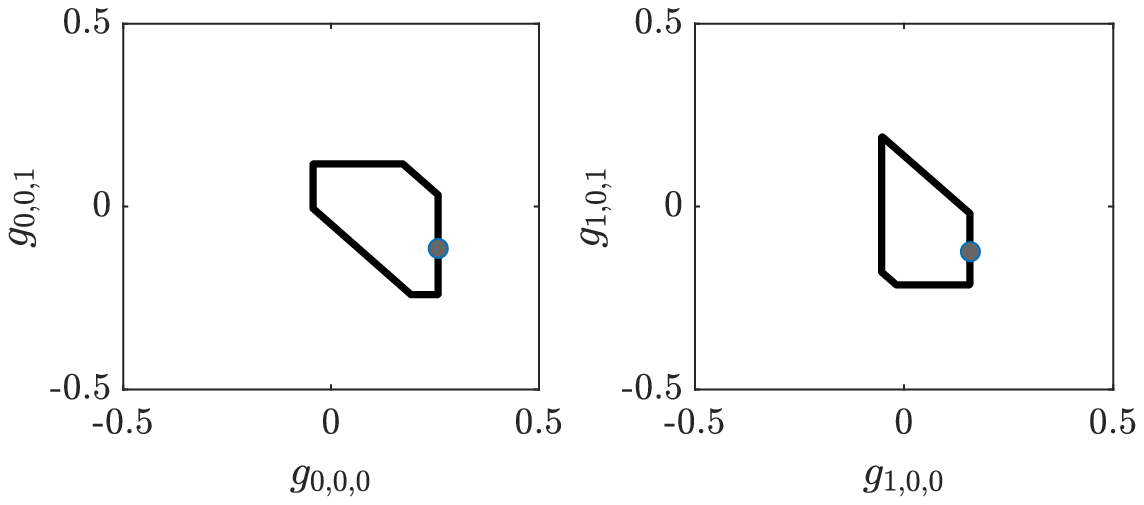 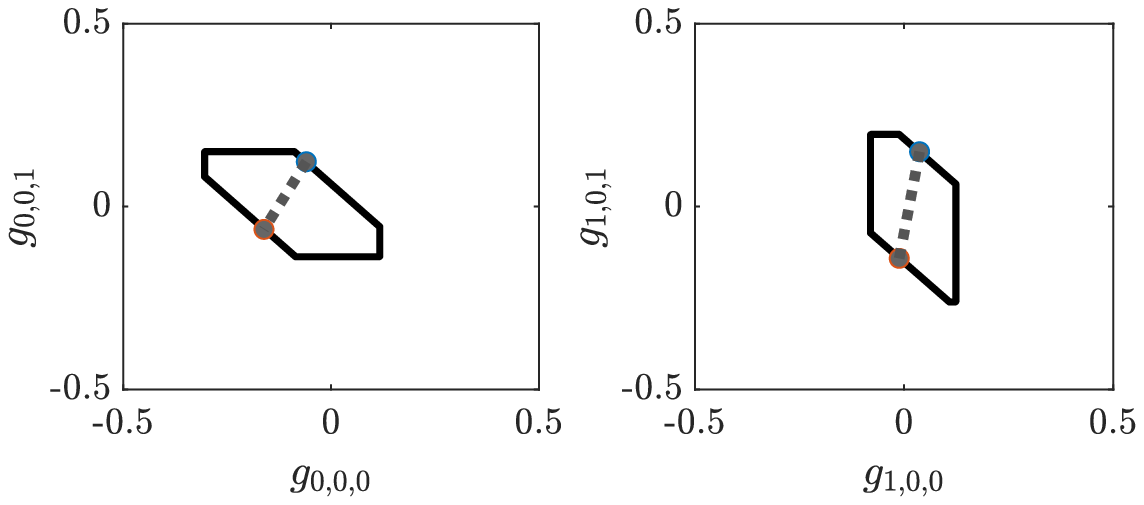

%%% Local Variables: 
%%% mode: latex
%%% TeX-master: t
%%% End: